\documentclass[lettersize,journal]{IEEEtran}
\usepackage{xcolor}%
\usepackage{enumitem}%
\usepackage{amssymb}%
\usepackage{amsthm}
\newtheorem{theorem}{Theorem}
\usepackage{amsmath,amsfonts}
\usepackage{algorithmic}
\usepackage{algorithm}
\usepackage{array}
\usepackage{subfigure}
\usepackage{textcomp}
\usepackage{stfloats}
\usepackage{url}
\usepackage{verbatim}
\usepackage{graphicx}
\usepackage{cite}
\usepackage{diagbox}
\usepackage{multirow}
\allowdisplaybreaks
\begin{document}

\title{Optimization and Deep Learning based Resource Allocation for UAV-Aided Wireless Communication with Rotatable Antenna Array\vspace{-0.2cm}}

\author{Fengcheng Pei, Lin Xiang, Anja Klein, and Robert Schober
        % <-this % stops a space
\vspace{-1.1cm}        }

\maketitle

\begin{abstract}
Multi-antenna unmanned aerial vehicle (UAV)-aided communication presents a promising solution to increase the system capacity and improve the quality of service (QoS) of the future wireless networks. In this paper, we equip a UAV platform with a rotatable antenna array (RAA), which can be rotated flexibly in three-dimensional (3D) space via an onboard gimbal, enabling additional spatial degrees of freedom (DoFs) for improving multiuser signal transmission and interference management. Compared with a conventional fixed antenna array (FAA), the RAA can proactively align users with the high-gain region of its antenna elements and reduce the spatial channel correlations among users. To demonstrate the advantages of RAA, we jointly design the RAA orientation and beamforming to maximize the sum-rate of multiple users subject to per-user QoS constraints. The formulated problem is highly nonconvex and exhibits strong coupling between the RAA orientation and beamforming variables. To solve this challenging problem, we propose first an optimization framework based on the penalty dual decomposition (PDD) method to iteratively optimize RAA orientation and beamforming. While the optimization framework yields high reliability in QoS satisfaction and favorable sum-rate performance, its iterative nature may hinder real-time deployment. To accelerate the joint design and preserve a high-quality solution, we further propose a deep learning (DL) framework based on graph neural networks (GNNs). Simulation results demonstrate that RAAs significantly outperform FAAs in UAV-aided communication. Additionally, the proposed optimization framework is capable of satisfying stringent QoS requirements with high reliability, while the proposed DL framework attains comparable sum-rate performance with substantially reduced computation time and exhibits robustness to user position information errors.
\end{abstract}
\vspace{-0.2cm}
\begin{IEEEkeywords}
Rotatable antenna array (RAA), unmanned aerial vehicle (UAV), optimization, graph neural network (GNN).
\end{IEEEkeywords}
\vspace{-0.8cm}
\section{Introduction}
Multi-antenna unmanned aerial vehicle (UAV)-aided wireless communication presents a promising solution to increase the spectral efficiency and enhance the service capabilities of sixth-generation (6G) cellular networks in both normal and emergency scenarios \cite{r1, r2,r3}. However, the elevated altitude of UAV platforms gives rise to line-of-sight (LoS) dominant channels for terrestrial users, which may create severe multiuser interference and hinder support for high user densities. Furthermore, UAV platforms are typically constrained by size, weight, and power (SWAP), restricting the dimensions of onboard antenna arrays, which further complicates multiuser interference management.

To address these challenges, existing research has primarily investigated the joint design of beamforming and UAV trajectory \cite{r5, r6} to maximize, e.g., the sum-rate, where the orientations of the onboard antenna arrays are fixed. For example, the UAV in \cite{r5} is equipped with a vertically placed uniform linear array (ULA) throughout flight. Such \emph{fixed antenna arrays (FAAs)} limit the ability to spatially separate users in the angular domain and may lead to highly correlated channels among users, significantly restricting the sum-rate performance. Moreover, during the flight of the UAV, users may fall within the low-gain regions of the FAA antenna elements, jeopardizing signal transmission and connectivity.

To overcome the drawbacks of FAAs, \emph{rotatable antenna arrays (RAAs)} have been proposed in \cite{r10}. Unlike FAAs, the orientation of RAAs can be adjusted and optimized flexibly in three-dimensional (3D) space, providing additional spatial degrees of freedom (DoFs). Through orientation design, the RAAs can separate users in the angular domain to mitigate multiuser interference and align users with the high-gain regions of their antenna elements to enhance channel gains. Compared with other intelligent antenna technologies, such as movable antennas (MAs) \cite{r7}, element-wise rotatable antennas (RAs) \cite{r8}, or six-dimensional MAs (6DMAs) \cite{6DMA}, which require complex mechanical structures and antenna feeders to reposition \cite{r7} or rotate \cite{r8} individual antenna elements, or adjust both the positions and orientations of 6DMA surfaces \cite{6DMA}, the antenna elements within an RAA remain static relative to each other during RAA rotation, which can be easily realized by a lightweight micro gimbal mounted on a UAV. Therefore, the simple structure of RAAs offers a favorable trade-off between exploiting spatial DoFs and implementation complexity for UAV platforms constrained by SWAP \cite{r10}. Additionally, unlike the UAV trajectory design in \cite{r5,r6}, which requires continuous repositioning of the UAV, RAAs can also provide extra spatial DoFs for a hovering UAV.

The existing literature has investigated UAV-aided communication with RAAs from various perspectives, including transmit power minimization \cite{r11}, spectral efficiency maximization \cite{r14}, energy efficiency maximization \cite{r12}, completion time minimization \cite{r13}, as well as optimization of integrated sensing and communication (ISAC) \cite{r15,r16}. Specifically, the authors of \cite{r11} minimize transmit power by jointly optimizing a hybrid precoder and RAA orientation in UAV-aided massive multiple-input multiple-output (MIMO) systems. In\cite{r14}, the authors investigate spectral efficiency maximization via joint transmit beamforming and array orientation design. Energy efficiency is maximized in \cite{r12} by jointly accounting for the total power consumption associated with UAV hovering, communication, and RAA rotation. The authors in \cite{r13} focus on minimizing the communication mission completion time of the UAV through the joint design of user scheduling, RAA orientation, and beamforming. Finally, UAV-aided ISAC systems with RAAs are investigated in \cite{r15,r16}. 

The existing results in \cite{r11,r12,r13,r14,r15,r16} have demonstrated significant performance gains of RAAs over FAAs. However, they focus on either maximizing the overall system performance without considering user fairness \cite{r12,r14} or minimizing the  required resource such as transmit power and task completion time to rigidly satisfy predefined quality-of-service (QoS) requirements \cite{r11,r13}. In the former case, the users with poor channels may become inactive in order to maximize the sum spectral efficiency, whereas in the latter case, the system only strives to satisfy the QoS requirements with equality, restricting the flexibility in resource allocation. As such, the potential of RAAs for intelligently and flexibly balancing system throughput and user fairness, which is central to practical deployment and resource allocation, remains uninvestigated. This research gap motivates us to jointly design RAA orientation and beamforming for maximization of the sum-rate subject to per-user minimum achievable rate requirements for UAV-aided communication. In this scenario, the RAA is particularly beneficial, as it can not only proactively improve the channel conditions of weak users through orientation design. Also, exploiting the flexible resource allocation in the considered joint design, the RAA can enable the system to achieve even higher system throughput or meet more stringent QoS requirements.

Nevertheless, the formulated joint design problem is highly nonconvex, with strong coupling between the RAA orientation and beamforming variables in both the objective function and QoS constraints, which makes the conventional block coordinate descent (BCD) methods as adopted in \cite{r12,r13,r14,r15,r16} prone to converging to poor local optima. To tackle this challenge, we propose first an optimization framework based on the penalty dual decomposition (PDD) method \cite{r25,r26}, which decouples the variables in the constraints before iteratively optimizing the RAA orientation and beamforming. The PDD method is guaranteed to converge to a Karush–Kuhn–Tucker (KKT) point. However, it typically requires a large number of iterations \cite{r27} for convergence, hindering real-time decision-making. Moreover, accurate user position information is crucial for designing RAA orientation and beamforming based on optimization methods \cite{r11,r12,r13,r14,r15,r16}, and the authors of \cite{r11} point out that even small position errors can dramatically degrade system performance.

These considerations inspire us to further adopt deep learning (DL) techniques to solve the joint design problem. Particularly, graph neural networks (GNNs) are powerful tools for resource allocation in wireless communications as they can leverage the underlying graph topology of wireless networks \cite{r17,r18,r19}. For example, the authors of \cite{r19} employ GNNs for beamforming design in an FAA-based multiuser MIMO system, and simulation results show that GNNs approach the performance of conventional optimization methods with greatly reduced computation time. Additionally, DL techniques empirically exhibit robustness against imperfect context or side information by learning input-output mappings\cite{r20}. 

Despite the advantages of DL and GNNs, directly applying the existing architectures in \cite{r17,r18,r19} to our problem leads to severe gradient imbalance during network training. This is because the RAA orientation and beamforming variables are strongly coupled and their gradients differ by orders of magnitude due to their distinct physical scales. Thus, a tailored GNN-based architecture and training strategy need to be developed. Motivated by this research gap, we propose a DL framework consisting of two key components: A GNN-based architecture with two serially connected modules dedicated to the RAA orientation and beamforming design, respectively, and a two-stage training strategy. We further conduct a comprehensive comparison between the two proposed design frameworks via simulations, covering the sum-rate, QoS satisfaction, computation time, and robustness to user position errors. To the best of our knowledge, DL methods have not yet been investigated for RAA-based communication systems. Moreover, while both optimization and DL are appealing approaches for the design of RAAs, a thorough comparison between them is lacking. Our main contributions can be summarized as follows:
\begin{itemize}[leftmargin=*, itemsep=0pt, topsep=0pt, parsep=0pt]
    \item We investigate an RAA-based UAV-aided multiuser downlink communication system, and we formulate the challenging problem of jointly optimizing RAA orientation and beamforming, aiming to maximize the sum-rate subject to per-user QoS constraints.
    \item The joint design problem is highly nonconvex and the variables are strongly coupled. To solve this problem, we propose an optimization framework based on the PDD method to decouple the variables and iteratively optimize them, with closed-form solutions derived for the subproblems. The proposed optimization framework is guaranteed to converge to a KKT point under mild conditions.
    \item We further propose a DL framework for the joint design problem with unsupervised learning. To overcome the gradient imbalance issue, we develop a GNN-based architecture with two serially connected modules, which output the RAA orientation and beamforming vectors, respectively, and design a two-stage training strategy for the two modules. The proposed DL framework satisfies scalability and input permutation equivariance. 
    \item Simulation results demonstrate that RAA-based UAV systems significantly outperform FAA-based ones in terms of sum-rate and QoS satisfaction. Additionally, the optimization and DL frameworks show complementary advantages. Specifically, the DL framework approaches the performance of the optimization framework with substantially reduced computation time and exhibits robustness to imperfect user position information, while the optimization framework can more reliably satisfy stringent QoS requirements than the DL framework.
\end{itemize} 

\emph{Notations:} $\Re(x)$ and $\Im(x)$ represent the real and imaginary parts of a complex number $x$, respectively. %$\mathbb{C}^{m \times n}$ and $\mathbb{R}^{m \times n}$ denote $m \times n$ complex- and real-valued matrices, respectively. Then, 
The $(m,n)$-th entry, the $m$-th row, and the $n$-th column of a matrix $\boldsymbol{A}$ are denoted by $\boldsymbol{A}_{(m,n)}$, $\boldsymbol{A}_{(m,:)}$, and $\boldsymbol{A}_{(:,n)}$, respectively. The trace of a square matrix $\boldsymbol{A}$ is given by tr$\{\boldsymbol{A}\}$. Furthermore, $(\cdot)^{\star}$ represents the optimal value of an optimization variable. Finally, $\boldsymbol{a}\mathbin{/\mkern-5mu/}\boldsymbol{b}$ represents the concatenation of vectors $\boldsymbol{a}$ and $\boldsymbol{b}$, and $\mathbin{/\mkern-5mu/}_{k=1}^{K}\boldsymbol{x}_k$ denotes the successive concatenation of vectors $\boldsymbol{x}_k$, for $k=1,2,...,K$.
\vspace{-0.3cm}
\section{System Model and Problem Formulation}
\vspace{-0.15cm}
\begin{figure}[t]
\centering \includegraphics[width=0.95\columnwidth]{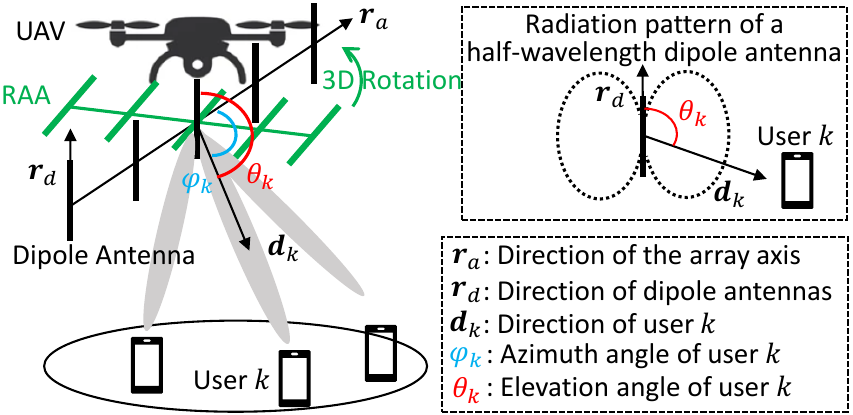} 
\vspace{-0.15cm}
\caption{UAV-aided communication system equipped with an RAA.}
\label{System}
\end{figure}
As shown in Fig.~\ref{System}, we consider downlink multiuser communication employing a multi-antenna rotary-wing UAV. The UAV hovers in the air and serves as a BS to communicate with $K$ ground users. Unlike the commonly adopted idealized isotropic antenna model, we consider a practical half-wavelength dipole antenna model for signal transmission. Due to the elevated altitude of the UAV, signal transmission will be dominated by the LoS channels \cite{r1,r2}. In this case, when FAAs composed of dipole antennas are employed, a large number of users may lead to severe LoS multiuser interference due to high channel correlations, and some users may also fall into low-gain regions of the dipole antennas. To overcome this predicament, we equip the UAV with an RAA composed of $N$ uniformly linearly spaced dipole antennas, with $N\geq K$. As depicted in Fig.~\ref{System}, the RAA is mounted on the UAV via a micro gimbal, enabling flexible 3D rotation around its centroid. By orientation design, the RAA can spatially separate users in the angular domain to mitigate channel correlation and also align them with the high-gain regions of dipole antennas, thereby enhancing overall system performance. We assume that each communication user has an isotropic receive antenna.
\vspace{-0.6cm}
\subsection{3D Channel Model}
We adopt a 3D Cartesian coordinate system, where the ground corresponds to the $xOy$ plane and the UAV is assumed to hover in the air at a given position $\boldsymbol{p}=[p_x,p_y,p_z]^\mathrm{T}$, with $p_z>0$. Let $\mathcal{K}\triangleq\{1,...,K\}$ be the set of users, where the position of user $k\in \mathcal{K}$ is denoted by $\boldsymbol{u}_k=[u_{k,x},u_{k,y},0]^\mathrm{T}$. We define unit vector $\boldsymbol{d}_k\!\triangleq\!(\boldsymbol{u}_k\!-\!\boldsymbol{p})/D_k\!\in\!\mathbb{R}^{3\times1}$ as the direction of user $k$ as seen from the UAV, where $D_k\!=\!\| \boldsymbol{u}_k\!-\!\boldsymbol{p} \|$ is the distance between the UAV and user $k$. Moreover, as illustrated in Fig.~\ref{System}, we define two orthonormal vectors $\boldsymbol{r}\!_{\text{a}}\!\in\!\mathbb{R}^{3\times1}$ and $\boldsymbol{r}\!_{\text{d}}\!\in\!\mathbb{R}^{3\times1}$ to represent the directions of the array axis and the dipoles in the RAA, respectively, which together characterize the RAA orientation \cite{r11,r12,r13,r14}.

We suppose that the UAV knows each user $k$'s position a priori and all users are assumed to experience LoS channels \cite{r11,r12,r13,r14}. Thus, the UAV-to-user $k$ channel $\boldsymbol{h}_k\in\mathbb{C}^{N\times1}$ is modeled as 
\begin{align}
\boldsymbol{h}_{k}=\sqrt{\beta_k}\cdot\boldsymbol{a}_{k},
\label{channel}
\end{align}
where $\beta_k=10^{-(140.7+36.7\log_{10}\frac{D_k}{1000})/10}$ is the path loss for $D_k$ in meter \cite{r21}.  $\boldsymbol{a}_{k}\in\mathbb{C}^{N\times1}$ denotes the transmit steering vector for user $k$ and is given by 
\begin{align}
\boldsymbol{a}_{k}=\alpha\cdot e_k \cdot \tilde{\boldsymbol{a}}_k,
\label{steering_vector}
\end{align}
where $e_k\in\mathbb{R}$ and $\tilde{\boldsymbol{a}}_k\in\mathbb{C}^{N\times1}$ represent the element factor of dipole antennas and the array factor of the RAA, respectively \cite{r22}.
As depicted in Fig.~\ref{System}, we refer to $\theta_k\in [0,\pi]$ and $\varphi_k\in [0,\pi]$ as the elevation and azimuth angles of user $k$ seen from the RAA, respectively, and it is immediate to show
\begin{align}
\theta_k  = \arccos(\boldsymbol{r}\!_{\text{d}}^\mathrm{T}\cdot \boldsymbol{d}_k), \
\varphi_k  =\arccos(\boldsymbol{r}\!_{\text{a}}^\mathrm{T}\cdot \boldsymbol{d}_k).
\end{align}
Accordingly, $e_k$, which also denotes the normalized radiation pattern of a half-wavelength dipole antenna (cf. Fig.~\ref{System}), and $\tilde{\boldsymbol{a}}_k$ in \eqref{steering_vector} are modeled by \cite{r22}
\begin{alignat}{1}
&e_k = \frac{\cos \left(\frac{\pi}{2} \cos \theta_{k}\right)}{\sin \theta_{k}}, \ \theta_{k}\in[0,\pi],\ \text{and} \nonumber \\
&\tilde{\boldsymbol{a}}_k=(1, e^{j \frac{2 \pi}{\lambda} d \cdot \cos \varphi_{k}}, \cdots, e^{j(N-1) \frac{2 \pi}{\lambda} d \cdot \cos \varphi_{k}} )^{\mathrm{T}},
\label{phase_effect}
\end{alignat}
respectively, where $\lambda$ is the carrier wavelength and $d$ is the separation of adjacent dipoles in the RAA. To mitigate mutual coupling among antenna elements, we adopt $d=\lambda/2$ in this paper \cite{r22}. Finally, $\alpha=\sqrt{2}$ is a normalization factor introduced to ensure that a dipole radiates the same total power as an isotropic antenna.
\vspace{-0.5cm}
\subsection{Signal Model}
Now, define $\boldsymbol{w}_{k}\in\mathbb{C}^{N\times1}$ as the
beamforming vector for transmitting signal $s_{k}\in\mathbb{C}$ to
user $k$ with $\mathbb{E}[\left|s_{k}\right|^{2}]=1$, for all $k\in\mathcal{K}$. 
The resulting received signal of user $k$ is given by 
\begin{align}
y_{k}=\boldsymbol{h}_{k}^{\mathrm{H}}\sum\nolimits_{k=1}^{K}\boldsymbol{w}_{k}\cdot s_{k}+n_{k},
\end{align}
where $n_{k}\in\mathbb{C}$ is the additive white Gaussian noise (AWGN) at user $k$, following
$n_{k}\sim\mathcal{CN}(0,\sigma_{k}^{2})$. Consequently, the spectral efficiency of user $k$ in nat/s/Hz is given by
\begin{align}
R_{k}=\ln(1+\gamma_k),
\label{rate}
\end{align}
where $\gamma_k\in\mathbb{R}_+$ denotes the received signal-to-interference-plus-noise ratio (SINR) of user $k$, given by
\begin{align}
\gamma_k = \frac{|\boldsymbol{h}_{k}^{\mathrm{H}}  \boldsymbol{w}_{k}|^{2}}{\sum_{m=1, m \neq k}^{K}|\boldsymbol{h}_{k}^{\mathrm{H}}  \boldsymbol{w}_{m}|^{2}+\sigma^{2}_k}.
\label{SINR}
\end{align}
\vspace{-0.6cm}
\subsection{Problem Formulation}
Based on \eqref{rate} and \eqref{SINR}, the spectral efficiency $R_{k}$ depends on both the beamforming vectors $\{\boldsymbol{w}_1,\boldsymbol{w}_2,...,\boldsymbol{w}_K\}$ and the channel $\boldsymbol{h}_k$. The latter is further impacted by the RAA orientation, i.e., $\{\boldsymbol{r}\!_{\text{a}},\boldsymbol{r}\!_{\text{d}}\}$, which has to be judiciously adjusted to maximize the system performance while maintaining fairness in resource allocation across the $K$ users. To this end, we jointly design the RAA orientation and transmit beamforming vectors to maximize the sum-rate of all users subject to per-user QoS requirements. We define $\boldsymbol{W}\triangleq[\boldsymbol{w}_1,\boldsymbol{w}_2,...,\boldsymbol{w}_K]\in\mathbb{C}^{N\times K}$ and formulate the problem as
\begin{equation}
\begin{aligned}
& \underset{\boldsymbol{W},\boldsymbol{r}\!_{\text{a}},\boldsymbol{r}\!_{\text{d}}}{\text{max}} 
& & \sum\nolimits_{k=1}^{K} R_k\\
& \ \ \text{s.t.}
& & \text{C1:} \ \sum\nolimits_{k=1}^{K}\left\|\boldsymbol{w}_{k}\right\|^{2}\leq P, \\
& & & \text{C2:} \ \left\| \boldsymbol{r}\!_{\text{a}} \right\|=1, \ \text{C3:} \ \left\| \boldsymbol{r}\!_{\text{d}} \right\|=1, \\
& & & \text{C4:} \ \boldsymbol{r}\!_{\text{a}}^\mathrm{T}\cdot \boldsymbol{r}\!_{\text{d}}=0, \ \text{C5:} \ R_k \geq \overline{R}_k, \ k\in\mathcal{K}.
\end{aligned}
\label{P1}
\tag{P1}
\end{equation}
In problem \eqref{P1}, constraint C1 limits the maximal transmit power for communication to $P$. Constraints C2--C4 ensure that the direction vectors $\boldsymbol{r}\!_{\text{a}}$ and $\boldsymbol{r}\!_{\text{d}}$ have unit length and are orthogonal. Constraint C5 guarantees a minimum achievable rate of $\overline{R}_k\in\mathbb{R}_+$ for each user $k$, which constitutes the QoS requirement. C5 is necessary in practical deployment to prevent exceedingly low or even zero achievable rates for users with poor channels, striking a balance between performance and fairness. Since $R_k$ increases monotonically with the SINR $\gamma_k$ in \eqref{rate}, constraint C5 can be reformulated as
\begin{align}
\text{C6}:\ \gamma_k \geq \overline{\gamma}_k, \ k\in\mathcal{K},
\end{align}
where $\overline{\gamma}_k\!=\!e^{\overline{R}_k}\!-\!1$ is the minimum required SINR of user $k$. 

Problem \eqref{P1} is highly nonconvex because of the nonconcave objective function and nonconvex constraints C2--C4, C6, making it intractable to find a global optimal solution. Moreover, variables $\{\boldsymbol{W},\boldsymbol{r}\!_{\text{a}},\boldsymbol{r}\!_{\text{d}}\}$ are tightly coupled in the objective function and in constraints C4, C6. If the BCD method is applied to decompose \eqref{P1} into subproblems and solve them iteratively in a straightforward way, the attained solutions are usually inefficient. This is because the block variables $\{\boldsymbol{W}\}$, $\{\boldsymbol{r}\!_{\text{a}}\}$, $\{\boldsymbol{r}\!_{\text{d}}\}$ are mutually dependent and non-separable in C4 and C6 of \eqref{P1}. By projecting in brute force the overall feasible region of \eqref{P1} onto a block of variables in a given iteration, the BCD method is confined to solutions within a significantly reduced search space. To address this challenging problem, we propose an optimization framework based on the PDD method and a DL framework utilizing GNNs in Sections~III and~IV, respectively. The proposed optimization framework first decouples the variables in \eqref{P1} before applying BCD, offering more principled handling of constraints C4 and C6 and theoretical convergence guaranties. The proposed DL framework employs a two-module GNN-based architecture together with a two-stage training strategy to handle the coupled variables, approaching the performance of the optimization framework with substantially reduced computation time.

\emph{Remark~1}: In the system model and problem formulation, we assume that the UAV has perfect knowledge of the users’ positions. However, acquiring the precise user positions can be challenging in practice due to fluctuations of the UAV or mechanical vibrations of the RAA \cite{r23,r24}. We will investigate such imperfections and evaluate their impact on the achievable performance of the proposed solutions via simulations in Section~V. Interestingly, our results reveal that the DL framework exhibits significant robustness against user position errors. 

\emph{Remark~2}: Although we assume in this paper the UAV to be hovering, the considered system model and the proposed optimization and DL frameworks can also be applied to moving UAVs by updating the RAA orientation and beamforming based on the varying UAV positions, or by additionally optimizing the UAV trajectory. The latter could offer even higher sum-rate performance under the same QoS requirements, whereas, the resulting joint design problem becomes significantly more difficult to solve due to strong coupling between the UAV trajectory, RAA orientation, and beamforming. Due to the limited space, this is left as an interesting topic for future research.
\vspace{-0.35cm}
\section{Optimization Framework for Problem \eqref{P1}}
In this section, we propose an optimization framework based on the PDD method to address the nonconvex design problem \eqref{P1}. The basic idea behind the PDD method is to decouple the variables in the constraints by introducing auxiliary variables and constructing an augmented objective function involving penalty terms and dual variables. As such, the resulting problem can be decomposed and solved iteratively via a double-loop algorithm, where the inner loop solves the subproblems in a BCD manner, while the outer loop updates the dual variables and/or penalty parameters to attain a feasible and efficient solution to \eqref{P1} \cite{r27}.
\vspace{-0.45cm}
\subsection{Problem Reformulation and Decoupling of Variables}
To facilitate the PDD method, we define auxiliary variables
$x_{k,m}\!=\!\boldsymbol{h}_{k}^\mathrm{H}  \boldsymbol{w}_{m}\!\in\!\mathbb{C}, \forall k,m \!\in\! \mathcal{K}$, and rewrite the achievable rate of user $k$ in terms of $x_{k,m}$ as $\tilde{R}_k(\boldsymbol{x}_k)\!=\!\ln(1\!+\!\tilde{\gamma}_k(\boldsymbol{x}_k))$ with
\begin{align}
\tilde{\gamma}_k(\boldsymbol{x}_k)=\ln\Big(1+\frac{|x_{k,k}|^2}{\sum\nolimits_{m=1, m \neq k}^{K}|x_{k,m}|^2+\sigma^{2}_k}\Big),
\end{align}
where $\boldsymbol{x}_k\triangleq[x_{k,1},x_{k,2},...,x_{k,K}]^\mathrm{T}\in\mathbb{C}^{K\times 1}$ and $\tilde{\gamma}_k(\boldsymbol{x}_k)$ represents the received SINR of user $k$ in terms of $\boldsymbol{x}_{k}$. Subsequently, using $\boldsymbol{x}_k$, constraint C6 is rewritten as
\begin{align}
\text{C7}:\ \overline{\gamma}_k&(\sum\nolimits_{m=1, m \neq k}^{K}|x_{k,m}|^2+\sigma^{2}_k)-|x_{k,k}|^2 \leq 0, \forall k. %\in\mathcal{K}.
\end{align}
By defining $\boldsymbol{X}\!\triangleq\![\boldsymbol{x}_1,\boldsymbol{x}_2,...,\boldsymbol{x}_K]^{\mathrm{T}}\!\in\!\mathbb{C}^{K\times K}$ and $\boldsymbol{H}\!\triangleq\![\boldsymbol{h}_1,\boldsymbol{h}_2,...,\boldsymbol{h}_K]\!\in\!\mathbb{C}^{N\times K}$, problem \eqref{P1} can be equivalently reformulated as
\begin{equation}
\begin{aligned}
& \underset{\boldsymbol{X},\boldsymbol{W},\boldsymbol{r}\!_{\text{a}},\boldsymbol{r}\!_{\text{d}}}{\text{max}} 
& & \sum\nolimits_{k=1}^{K} \tilde{R}_k(\boldsymbol{x}_k)\\
& \ \ \ \text{ s.t.}
& & \text{C1, C2, C3, C4, C7,} \\
& & & \text{C8:} \ \boldsymbol{X}=\boldsymbol{H}^\mathrm{H}\boldsymbol{W}.
\end{aligned}
\label{P1'}
\tag{$\widetilde{\text{P1}}$}
\end{equation}

Problem \eqref{P1'} remains nonconvex because of the nonconcave objective function and nonconvex constraints C2--C4 and C7. However, variables $\{ \boldsymbol{X},\boldsymbol{W},\boldsymbol{r}\!_{\text{a}},\boldsymbol{r}\!_{\text{d}}\}$ are only coupled in equality constraints C4 and C8, and the remaining problem (excluding C4 and C8) is \emph{block-separable} w.r.t. $\{\boldsymbol{X}\}$, $\{\boldsymbol{W}\}$, $\{\boldsymbol{r}\!_{\text{a}}\}$, and $\{\boldsymbol{r}\!_{\text{d}}\}$. To exploit these properties via PDD, we further define the augmented Lagrangian (AL) problem of \eqref{P1'} as \cite{r28}
\begin{alignat}{1}
%\begin{aligned}
%&
&\underset{\boldsymbol{X},\boldsymbol{W},\boldsymbol{r}\!_{\text{a}},\boldsymbol{r}\!_{\text{d}}}{\text{max}}
\mathcal{L}_\text{PDD}=\sum\nolimits_{k=1}^{K} \tilde{R}_k(\boldsymbol{x}_k) \nonumber \\
&
-\frac{1}{2\rho_1}\| \boldsymbol{X}-\boldsymbol{H}^\mathrm{H}\boldsymbol{W}+\rho_1 \boldsymbol{Z}_1 \|_{\text{F}}^2-\frac{1}{2\rho_2}|\boldsymbol{r}\!_{\text{a}}^\mathrm{T}\boldsymbol{r}\!_{\text{d}}+\rho_2z_2|^2 \label{P2}
\tag{P2}
\\
& \text{ s.t. \; }
\text{ C1, C2, C3, C7,} \nonumber
\end{alignat}
where C4 and C8 in \eqref{P1'} are absorbed into the objective function of \eqref{P2} using dual variables and penalty parameters. Particularly, $\boldsymbol{Z}_1\!\in\!\mathbb{C}^{K\times K}$ ($\rho_1\!\in\!\mathbb{R}_+$) and $z_2\!\in\!\mathbb{R}$ ($\rho_2\!\in\!\mathbb{R}_+$) denote the dual variables (penalty parameters) for C8 and C4, respectively. As expected, variables $\{ \boldsymbol{X},\boldsymbol{W},\boldsymbol{r}\!_{\text{a}},\boldsymbol{r}\!_{\text{d}}\}$ in \eqref{P2} are no longer coupled in the constraints, providing a favorable decomposable structure. With this, we are ready to apply the PDD method to solve problem \eqref{P2} in two loops: In the inner iteration, both the penalty parameters and the dual variables are fixed and \eqref{P2} can be efficiently solved in a BCD manner for the block variables $\{\boldsymbol{W}\}$, $\{\boldsymbol{X}\}$, $\{\boldsymbol{r}\!_{\text{a}}\}$, and $\{\boldsymbol{r}\!_{\text{d}}\}$; in the outer iteration, we update the dual variables and the penalty parameters to gradually enforce equality constraints C4 and C8 and improve the quality of the overall solution.
\vspace{-0.2cm}
\subsection{Inner Iteration}
In this part, we solve \eqref{P2} for given penalty parameters and dual variables. Exploiting its block-separable structure, we decompose \eqref{P2} into four subproblems, each of which optimizes one block variable in $\{\boldsymbol{W}\}$, $\{\boldsymbol{X}\}$, $\{\boldsymbol{r}\!_{\text{a}}\}$, and $\{\boldsymbol{r}\!_{\text{d}}\}$. Closed-form solutions for these subproblems are derived.
\subsubsection{Optimization of $\boldsymbol{W}$} For fixed $\{\boldsymbol{X},\boldsymbol{r}\!_{\text{a}},\boldsymbol{r}\!_{\text{d}}\}$, the subproblem of optimizing $\boldsymbol{W}$ becomes
\begin{equation}
\begin{aligned}
& \underset{\boldsymbol{W}}{\text{min}}
& & \| \boldsymbol{X}-\boldsymbol{H}^\mathrm{H}\boldsymbol{W}+\rho_1 \boldsymbol{Z}_1 \|_{\text{F}}^2 \\
& \text{ s.t.}
& & \widetilde{\text{C1}}: \ \| \boldsymbol{W}\|_{\text{F}}^2\leq P, \\
\end{aligned}
\label{P3}
\tag{P3}
\end{equation}
where constraint $\widetilde{\text{C1}}$ is an equivalent reformulation of C1. Problem \eqref{P3} is convex and satisfies the Slater's condition, for which strong duality holds. Thus, \eqref{P3} can be tackled by solving its Lagrangian dual problem \cite{r29}. The optimal $\boldsymbol{W}$ is
\begin{align}
\boldsymbol{W}^\star=\boldsymbol{W}(\lambda_1^\star)=(\boldsymbol{H}\boldsymbol{H}^H+\lambda_1^\star\boldsymbol{I}\!_N)^{-1}\boldsymbol{H}(\boldsymbol{X}+\rho_1 \boldsymbol{Z}_1),
\end{align}
where $\lambda_1\geq0$ denotes the Lagrangian multiplier of $\widetilde{\text{C1}}$ and $\lambda_1^\star$ denotes its optimal value. If $\| \boldsymbol{W}(0) \|_{\text{F}}^2\leq P$, $\lambda_1^\star=0$; otherwise, $\lambda_1^\star$ can be found by solving the equation $\| \boldsymbol{W}(\lambda_1) \|_{\text{F}}^2=P$ via a bisection search \cite{r29}.
\subsubsection{Optimization of $\boldsymbol{X}$} For fixed $\{\boldsymbol{W},\boldsymbol{r}\!_{\text{a}},\boldsymbol{r}\!_{\text{d}}\}$, the subproblem of optimizing $\boldsymbol{X}$ is given by
\begin{equation}
\begin{aligned}
& \underset{\boldsymbol{X}}{\text{max}}
& & \sum\nolimits_{k=1}^{K} \tilde{R}_k(\boldsymbol{x}_k)
-\frac{1}{2\rho_1}\| \boldsymbol{X}-\boldsymbol{H}^\mathrm{H}\boldsymbol{W}+\rho_1 \boldsymbol{Z}_1 \|_{\text{F}}^2 \\
& \text{ s.t.}
& & \text{C7.}
\end{aligned}
\label{P4}
\tag{P4}
\end{equation}
\eqref{P4} is nonconvex due to the nonconcave component $\sum\nolimits_{k=1}^{K} \tilde{R}_k(\boldsymbol{x}_k)$ of the objective function and nonconvex constraint C7. By deriving a concave surrogate of $\tilde{R}_k(\boldsymbol{x}_k)$, we solve \eqref{P4} iteratively using the majorization-minimization (MM) algorithm \cite{r31}. To this end, we define $i$ as the iteration index for solving \eqref{P4} and exploit the following theorem.
\begin{theorem}
Given any $\boldsymbol{x}_k^{(i)}$, $\tilde{R}_k(\boldsymbol{x}_k)$ is lower bounded by 
\begin{align}
&\tilde{R}_k(\boldsymbol{x}_k)\geq f(\boldsymbol{x}_k,\boldsymbol{x}_k^{(i)}) \notag \\
&=\ln(1+\tilde{\gamma}_k(\boldsymbol{x}_k^{(i)}))-\tilde{\gamma}_k(\boldsymbol{x}_k^{(i)}) \label{LB} \\
&+2\Re\left\{ \frac{x_{k,k}^{(i)*}x_{k,k}}{\zeta_k(\boldsymbol{x}_k^{(i)})} \right\}-\left(\frac{1}{\zeta_k(\boldsymbol{x}_k^{(i)})}-\frac{1}{\eta_k(\boldsymbol{x}_k^{(i)})}\right)\eta_k(\boldsymbol{x}_k), \notag
\end{align}
where $\zeta_k(\boldsymbol{x}_k)=\sum_{m=1, m \neq k}^{K}|x_{k,m}|^2+\sigma^{2}_k$, $\eta_k(\boldsymbol{x}_k)=\zeta_k(\boldsymbol{x}_k)+|x_{k,k}|^2$, and equality holds in \eqref{LB} when $\boldsymbol{x}_k=\boldsymbol{x}_k^{(i)}$.
\end{theorem}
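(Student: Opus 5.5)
The plan is to write $\tilde{R}_k$ as a composition of a convex scalar function with a ratio, and apply two first-order (tangent) lower bounds in sequence. Throughout, $\tilde{\gamma}_k(\boldsymbol{x}_k)$ is read as the SINR $|x_{k,k}|^2/\zeta_k(\boldsymbol{x}_k)$; the displayed "$\ln(1+\cdot)$" in its definition looks like a typo, since otherwise $\tilde{R}_k$ would be a double logarithm. I use the shorthands $a=x_{k,k}$, $\zeta=\zeta_k(\boldsymbol{x}_k)$, $\eta=\eta_k(\boldsymbol{x}_k)=\zeta+|a|^2$, and write $a_0,\zeta_0,\eta_0,\gamma_0$ for the same quantities at $\boldsymbol{x}_k^{(i)}$. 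Note that $\zeta\geq\sigma_k^2>0$, so all denominators are positive.

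\textbf{Step 1 (outer tangent bound).} Since $1+\tilde{\gamma}_k=\eta/\zeta$, we have
\[
\tilde{R}_k=-\ln t, \qquad t\triangleq\zeta/\eta=1-|a|^2/\eta\in(0,1].
\]
The function $-\ln t$ is convex, so it lies above its tangent at $t_0=\zeta_0/\eta_0$:
\[
\tilde{R}_k\geq \ln(\eta_0/\zeta_0)+1-\frac{\eta_0}{\zeta_0}\Big(1-\frac{|a|^2}{\eta}\Big).
\]
Because the coefficient $\eta_0/\zeta_0$ is positive, it now suffices to lower bound $|a|^2/\eta$.

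\textbf{Step 2 (inner tangent bound).} The quadratic-over-linear function $g(a,\eta)=|a|^2/\eta$ is jointly convex on $\mathbb{C}\times\mathbb{R}_{++}$. Its first-order bound at $(a_0,\eta_0)$ is
\[
\frac{|a|^2}{\eta}\geq \frac{2\Re\{a_0^*a\}}{\eta_0}-\frac{|a_0|^2}{\eta_0^2}\,\eta .
\]
This bound is valid for every pair $(a,\eta)$ with $\eta>0$, and in particular for $\eta=\eta_k(\boldsymbol{x}_k)$. It does not matter that $\eta$ itself depends on $a$, because the inequality holds pointwise.

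\textbf{Step 3 (combine and simplify).} Substituting Step 2 into Step 1 gives
\[
\tilde{R}_k\geq \ln(1+\gamma_0)+1-\frac{\eta_0}{\zeta_0}+\frac{2\Re\{a_0^*a\}}{\zeta_0}-\frac{|a_0|^2}{\zeta_0\eta_0}\,\eta .
\]
Two identities then give the stated form of $f(\boldsymbol{x}_k,\boldsymbol{x}_k^{(i)})$ in \eqref{LB}:
\[
1-\frac{\eta_0}{\zeta_0}=-\gamma_0, \qquad \frac{|a_0|^2}{\zeta_0\eta_0}=\frac{\eta_0-\zeta_0}{\zeta_0\eta_0}=\frac{1}{\zeta_0}-\frac{1}{\eta_0}.
\]

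\textbf{Step 4 (equality at the expansion point).} Both tangent bounds are tight at the point of expansion, so equality holds when $\boldsymbol{x}_k=\boldsymbol{x}_k^{(i)}$. One can also check this directly: at $\boldsymbol{x}_k=\boldsymbol{x}_k^{(i)}$ the last two terms of $f$ sum to $2|a_0|^2/\zeta_0-|a_0|^2/\zeta_0=\gamma_0$, which cancels the $-\gamma_0$ term and leaves $\ln(1+\gamma_0)=\tilde{R}_k(\boldsymbol{x}_k^{(i)})$.

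The only delicate point is choosing the right decomposition in Step 1 so that the final coefficients come out exactly as stated. The alternative $\ln\eta-\ln\zeta$ does not work, because $-\ln\zeta$ is not directly tangent-boundable from below. The sign bookkeeping in Step 2 also needs care, since the factor $\eta_0/\zeta_0>0$ must multiply a lower bound. Everything else is routine algebra. As a byproduct, $f$ is concave in $\boldsymbol{x}_k$: it is a linear term minus a positive multiple of the convex quadratic $\eta_k$, and the coefficient $1/\zeta_0-1/\eta_0$ is nonnegative. This is what the MM scheme needs.
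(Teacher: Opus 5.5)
Your proof is correct and complete. Note that the paper does not actually prove Theorem~1: it omits the proof and only says it follows ``similar steps as in'' \cite{r30}. Your derivation therefore fills that gap rather than reproducing an argument in the text.

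Your reading of $\tilde{\gamma}_k$ as the SINR $|x_{k,k}|^2/\zeta_k(\boldsymbol{x}_k)$ is not just convenient but necessary. Under the literal double-logarithm definition, evaluating $f$ at $\boldsymbol{x}_k=\boldsymbol{x}_k^{(i)}$ gives $\ln(1+\tilde{\gamma}_k)-\tilde{\gamma}_k+s$, where $s$ is the SINR and $\tilde{\gamma}_k=\ln(1+s)<s$. So $f$ would strictly exceed $\tilde{R}_k$ at the expansion point whenever $s>0$.

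A common alternative derivation of this surrogate uses the WMMSE-type identity $\ln(1+|a|^2/\zeta)=\max_{w>0,\,u\in\mathbb{C}}\{\ln w-w(|1-u^*a|^2+|u|^2\zeta)+1\}$. Freezing the MMSE receiver $u=a_0/\eta_0$ and the weight $w=\eta_0/\zeta_0$ from $\boldsymbol{x}_k^{(i)}$ gives exactly $f(\boldsymbol{x}_k,\boldsymbol{x}_k^{(i)})$ in your shorthand. This is your argument in different packaging:
\begin{itemize}
\item Since $-\ln t=\max_{w>0}\{\ln w-wt+1\}$, freezing $w$ is precisely your tangent bound in Step~1.
\item Since $|a|^2/\eta=\max_{u}\{2\Re\{u^*a\}-|u|^2\eta\}$, freezing $u$ is precisely your quadratic-over-linear tangent in Step~2.
\end{itemize}
Your version needs only two elementary convexity facts and makes the pointwise validity and the concavity of $f$ transparent. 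The WMMSE version is more compact and explains tightness at $\boldsymbol{x}_k^{(i)}$ as optimality of the frozen receiver and weight.

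One small correction to your closing remark, which does not affect the proof. In the decomposition $\ln\eta-\ln\zeta$, the obstruction is $\ln\eta$, not $-\ln\zeta$. The function $-\ln\zeta$ is convex in $\zeta$, so $-\ln\zeta\geq-\ln\zeta_0+1-\zeta/\zeta_0$ is a valid lower bound, and it is even concave in $\boldsymbol{x}_k$. By contrast, the tangent of the concave $\ln\eta$ is only an upper bound.
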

The proof of Theorem 1 is omitted due to the limited space and can be obtained following similar steps as in \cite{r30}. Note that in Theorem 1, $f(\boldsymbol{x}_k,\boldsymbol{x}_k^{(i)})$ is a concave function that has the same value and first-order derivative as $\tilde{R}_k(\boldsymbol{x}_k)$ at $\boldsymbol{x}_k^{(i)}$. Thus, $f(\boldsymbol{x}_k,\boldsymbol{x}_k^{(i)})$ provides a concave surrogate function that tightly lower-bounds $\tilde{R}_k(\boldsymbol{x}_k)$. Given an initial solution, denoted by $\boldsymbol{X}^{(i)}$, we can then apply the MM algorithm to search for an improved solution, denoted as $\boldsymbol{X}^{(i+1)}$, by solving
\begin{equation}
\begin{aligned}
& \underset{\boldsymbol{X}}{\text{min}}
& & -\sum\nolimits_{k=1}^{K} f(\boldsymbol{x}_k,\boldsymbol{x}_k^{(i)}) \\
&&&+\frac{1}{2\rho_1}\| \boldsymbol{X}-\boldsymbol{H}^\mathrm{H}\boldsymbol{W}+\rho_1 \boldsymbol{Z}_1 \|_{\text{F}}^2 \\
& \text{ s.t.}
& & \text{C7.}
\end{aligned}
\label{P5}
\tag{P5}
\end{equation}
Compared with $\boldsymbol{X}^{(i)}$, $\boldsymbol{X}^{(i+1)}$ also provides an improved solution for \eqref{P4}. As such, by iteratively replacing the intractable part $\sum\nolimits_{k=1}^{K} \tilde{R}_k(\boldsymbol{x}_k)$ in the objective of \eqref{P4} with a tractable surrogate $f(\boldsymbol{x}_k,\boldsymbol{x}_k^{(i)})$ and solving the resulting problem \eqref{P5}, the MM algorithm will converge to a locally optimal solution of \eqref{P4} \cite{r31}. Note that interestingly, although \eqref{P5} is still nonconvex due to the nonconvex constraint C7, it is also tractable. To see this, we observe that \eqref{P5} can be decomposed into $K$ independent subproblems, each optimizing a row vector of matrix $\boldsymbol{X}$, i.e., $\boldsymbol{x}_k$, $k\in\mathcal{K}$, as follows,
%\begin{equation}
\begin{alignat}{2}
& \underset{\boldsymbol{x}_k}{\text{min}} & & - f(\boldsymbol{x}_k,\boldsymbol{x}_k^{(i)}) \nonumber \\
&&&+\frac{1}{2\rho_1}\sum\nolimits_{m=1}^{K}|x_{k,m}-\boldsymbol{h}_k^{\mathrm{H}}\boldsymbol{w}_m+\rho_1\boldsymbol{Z}_{1(k,m)}|^2 \label{P6}
\tag{P6} \\
& \text{ s.t.}
& & \text{C8:} \ \overline{\gamma}_k\Big(\sum_{m=1, m \neq k}^{K}|x_{k,m}|^2+\sigma^{2}_k\Big)-|x_{k,k}|^2 \leq 0. \nonumber
\end{alignat}
%\end{equation}
Problem \eqref{P6} is a quadratic program subject to a single quadratic constraint, for which strong duality also holds according to \cite[Appendix~B.1]{r29}. As a result, the KKT conditions necessarily hold for the optimal solution. Define $\boldsymbol{x}_k^\star$ and $\lambda_{2,k}^\star$ as the optimal $\boldsymbol{x}_k$ and the optimal Lagrangian multiplier w.r.t. C8, respectively. Now, we obatin the following theorem:
\begin{theorem}
By solving the KKT conditions of \eqref{P6}, we have
\begin{align}
x_{k,m}^\star=\frac{\boldsymbol{h}_k^\mathrm{H}\boldsymbol{w}_m-\rho_1\boldsymbol{Z}_{1(k,m)}}{2\rho_1(\delta(\boldsymbol{x}_k^{(i)})+\lambda_{2,k}^\star\overline{\gamma}_k)},\ \text{for }m\neq k,
\label{xkm}
\end{align}
where $\delta(\boldsymbol{x}_k)=\frac{1}{\zeta(\boldsymbol{x}_k)}-\frac{1}{\eta(\boldsymbol{x}_k)}-\frac{1}{2\rho_1}$. If $\frac{1}{2\rho_1}(\rho_1\boldsymbol{Z}_{1(k,k)}-\boldsymbol{h}_k^\mathrm{H}\boldsymbol{w}_k)-x_{k,k}^{(i)}/\zeta_k(\boldsymbol{x}_k^{(i)})=0$, then $\lambda_{2,k}^\star=\delta(\boldsymbol{x}_k^{(i)})$ and
\begin{align}
x_{k,k}^\star=\sqrt{\overline{\gamma}_k(\sum\nolimits_{m=1, m \neq k}^{K}|x_{k,m}^\star|^2+\sigma^{2}_k)}\cdot e^{j\iota},
\label{xkk1}
\end{align}
where $\iota\in\mathbb{R}$ is an arbitrary value; otherwise,
\begin{align}
x_{k,k}^\star=\frac{\frac{1}{2\rho_1}(\rho_1\boldsymbol{Z}_{1(k,k)}-\boldsymbol{h}_k^\mathrm{H}\boldsymbol{w}_k)-x_{k,k}^{(i)}/\zeta_k(\boldsymbol{x}_k^{(i)})}{\delta(\boldsymbol{x}_k^{(i)})-\lambda_{2,k}^\star},
\label{xkk2}
\end{align}
where $\lambda_{2,k}^\star$ in \eqref{xkk2} can be found by solving the equation 
\begin{align}
\lambda_{2,k}^\star\overline{\gamma}_k\big(\sum\nolimits_{m=1, m \neq k}^{K}|x_{k,m}^\star|^2+\sigma^{2}_k\big)-\lambda_{2,k}^\star|x_{k,k}^\star|^2=0,
\end{align}
via a bisection search.
\end{theorem}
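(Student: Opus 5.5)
The plan is to write out the Lagrangian of \eqref{P6} and solve its KKT system, using the strong duality already stated for this single-quadratic-constraint quadratic program. Let $c_{k,m}\triangleq\boldsymbol{h}_k^{\mathrm{H}}\boldsymbol{w}_m-\rho_1\boldsymbol{Z}_{1(k,m)}$. Expanding $f(\boldsymbol{x}_k,\boldsymbol{x}_k^{(i)})$ from Theorem~1, the only $\boldsymbol{x}_k$-dependent parts of the objective are three terms. The first is the linear term $-2\Re\{x_{k,k}^{(i)*}x_{k,k}\}/\zeta_k(\boldsymbol{x}_k^{(i)})$. The second is the quadratic term $(1/\zeta_k(\boldsymbol{x}_k^{(i)})-1/\eta_k(\boldsymbol{x}_k^{(i)}))\,\eta_k(\boldsymbol{x}_k)$. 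The third is the proximal term $\frac{1}{2\rho_1}\sum_m|x_{k,m}-c_{k,m}|^2$. Since $\eta_k(\boldsymbol{x}_k)$ is a weighted sum of the $|x_{k,m}|^2$ plus a constant, the objective is a separable quadratic in the entries of $\boldsymbol{x}_k$. The constraint C8 is also separable. It has coefficient $\overline{\gamma}_k$ on each $|x_{k,m}|^2$ with $m\neq k$, and coefficient $-1$ on $|x_{k,k}|^2$. So I will form
\begin{align}
\mathcal{L}(\boldsymbol{x}_k,\lambda_{2,k})=-f(\boldsymbol{x}_k,\boldsymbol{x}_k^{(i)})+\frac{1}{2\rho_1}\sum\nolimits_{m}|x_{k,m}-c_{k,m}|^2+\lambda_{2,k}\Big(\overline{\gamma}_k\big(\zeta_k(\boldsymbol{x}_k)\big)-|x_{k,k}|^2\Big). \nonumber
\end{align}

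Next I take Wirtinger derivatives $\partial\mathcal{L}/\partial x_{k,m}^*=0$ entrywise. For $m\neq k$, each equation has the form (scalar coefficient built from $1/\zeta_k-1/\eta_k$, $1/(2\rho_1)$ and $\lambda_{2,k}\overline{\gamma}_k$) times $x_{k,m}$, equal to a constant proportional to $c_{k,m}$. Solving it gives \eqref{xkm}, once the quantities are collected into $\delta(\boldsymbol{x}_k^{(i)})$ under the paper's normalization. For $m=k$, the constraint enters with the opposite sign. Stationarity then reads $(\delta(\boldsymbol{x}_k^{(i)})-\lambda_{2,k})\,x_{k,k}=b_k$, where $b_k\triangleq\frac{1}{2\rho_1}(\rho_1\boldsymbol{Z}_{1(k,k)}-\boldsymbol{h}_k^{\mathrm{H}}\boldsymbol{w}_k)-x_{k,k}^{(i)}/\zeta_k(\boldsymbol{x}_k^{(i)})$.

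The case split follows from this scalar equation.
\begin{itemize}
\item If $b_k\neq0$, then $\lambda_{2,k}\neq\delta(\boldsymbol{x}_k^{(i)})$, and dividing gives \eqref{xkk2}.
\item If $b_k=0$, either $x_{k,k}=0$ or $\lambda_{2,k}^\star=\delta(\boldsymbol{x}_k^{(i)})$. The first option is infeasible, because C8 with $\sigma_k^2>0$ forces $|x_{k,k}|^2\geq\overline{\gamma}_k\sigma_k^2>0$. So $\lambda_{2,k}^\star=\delta(\boldsymbol{x}_k^{(i)})$, which I expect to be positive in this regime (to be checked). Complementary slackness then makes C8 active. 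That fixes only the modulus of $x_{k,k}^\star$ and leaves its phase free, which is \eqref{xkk1}.
\end{itemize}
In the generic case $b_k\neq0$, I substitute \eqref{xkm} and \eqref{xkk2} into the complementary-slackness condition. This gives the scalar equation in $\lambda_{2,k}$ stated at the end of the theorem. Either $\lambda_{2,k}^\star=0$, when the unconstrained stationary point is feasible, or the constraint function evaluated along $\lambda_{2,k}$ must vanish.

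The main obstacle is justifying the bisection. I have to show that the constraint function $\overline{\gamma}_k\big(\sum_{m\neq k}|x_{k,m}^\star(\lambda)|^2+\sigma_k^2\big)-|x_{k,k}^\star(\lambda)|^2$ is monotone on the admissible interval of $\lambda$, and that it changes sign there. The admissible interval is bounded by $\lambda\geq0$ and by the requirement that the denominators keep the sign needed for a minimum, i.e.\ $\lambda<\delta(\boldsymbol{x}_k^{(i)})$ so that the $x_{k,k}$-Hessian of $\mathcal{L}$ stays positive. I will try to settle this by direct differentiation. As $\lambda$ grows, each $|x_{k,m}^\star|^2$ with $m\neq k$ decreases like $1/(\cdot+\lambda\overline{\gamma}_k)^2$, while $|x_{k,k}^\star|^2=|b_k|^2/(\delta-\lambda)^2$ increases to infinity as $\lambda\to\delta$. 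So the function is strictly decreasing and becomes negative, which gives a unique root. A secondary concern is bookkeeping: signs and factors of $2\rho_1$ must be tracked so that the closed forms match \eqref{xkm}--\eqref{xkk2} exactly under the paper's definition of $\delta$.
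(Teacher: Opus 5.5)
Your route is the paper's route. You form the Lagrangian of \eqref{P6} and read \eqref{xkm} off the $m\neq k$ stationarity condition. You split on whether $b_k$ vanishes, using $|x_{k,k}^\star|^2\geq\overline{\gamma}_k\sigma_k^2>0$ to force $\lambda_{2,k}^\star=\delta$ when $b_k=0$. Then you substitute into complementary slackness and solve the scalar equation by bisection.

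Your restriction to $\lambda\in[0,\delta)$ is a refinement the paper omits, and it is needed. The paper calls $\kappa$ monotone without naming a domain. But for $b_k\neq0$, $\kappa\to-\infty$ as $\lambda\to\delta^{+}$ and $\kappa\to\overline{\gamma}_k\sigma_k^2>0$ as $\lambda\to\infty$. So there is a second, spurious KKT point on $(\delta,\infty)$, which an unrestricted bisection could return. To exclude it rigorously, use the strong duality the paper invokes: $\boldsymbol{x}_k^\star$ must minimize $\mathcal{L}(\cdot,\lambda_{2,k}^\star)$. The dual function is therefore $-\infty$ unless $\delta-\lambda\geq0$, and equality is impossible when $b_k\neq0$.

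Do not defer the sign bookkeeping, because it does not come out as stated.
\begin{itemize}
\item The proximal term $\frac{1}{2\rho_1}|x_{k,m}-c_{k,m}|^2$ contributes $+\frac{1}{2\rho_1}x_{k,m}$ to $\partial\mathcal{L}/\partial x_{k,m}^*$. The coefficient in \eqref{xkm} is therefore $\delta+\lambda\overline{\gamma}_k$ only with $\delta=\frac{1}{\zeta_k(\boldsymbol{x}_k^{(i)})}-\frac{1}{\eta_k(\boldsymbol{x}_k^{(i)})}+\frac{1}{2\rho_1}$, not with $-\frac{1}{2\rho_1}$ as written.
\item The $x_{k,k}$ condition is $(\delta-\lambda_{2,k})x_{k,k}=-b_k$, not $+b_k$ as you wrote. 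This is also what the paper's own appendix stationarity condition gives. So \eqref{xkk2} holds only up to a sign. That leaves $\kappa$ and $|x_{k,k}^\star|$ unchanged, but the phase of $x_{k,k}^\star$ is that of $-b_k$.
\item With the corrected $\delta$, your ``to be checked'' item is automatic. Since $\eta_k\geq\zeta_k>0$, you get $\delta>0$. Hence $\lambda_{2,k}^\star=\delta$ is dual feasible in the $b_k=0$ case, and $\delta+\lambda\overline{\gamma}_k>0$ on the admissible interval, as your monotonicity argument requires.
\end{itemize}
Under the literal definition, $\delta$ can be negative, and the $b_k=0$ claim would then violate dual feasibility. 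So state and prove the corrected version.
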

\begin{proof}
    Please refer to the Appendix.
\end{proof}
\subsubsection{Optimization of $\boldsymbol{r}\!_{\text{a}}$} For fixed $\{\boldsymbol{W},\boldsymbol{X},\boldsymbol{r}\!_{\text{d}}\}$, the subproblem of optimizing $\boldsymbol{r}\!_{\text{a}}$ is given by
\begin{alignat}{1}
\underset{\boldsymbol{r}\!_{\text{a}}}{\text{min}}
 & \quad g(\boldsymbol{r}\!_{\text{a}}) \!=\! \tfrac{1}{2\rho_1} \| \boldsymbol{X}\!-\!\boldsymbol{H}^\mathrm{H}\boldsymbol{W}\!+\!\rho_1 \boldsymbol{Z}_1 \|_{\text{F}}^2  \!+\!\tfrac{1}{2\rho_2} |\boldsymbol{r}\!_{\text{a}}^\mathrm{T}\boldsymbol{r}\!_{\text{d}}\!+\!\rho_2z_2|^2 \nonumber \\
\text{ s.t.}
 & \text{\quad  C2.}
%\end{aligned}
\label{P7}
\tag{P7}
\end{alignat}
\eqref{P7} is nonconvex due to the nonconvex objective $g(\boldsymbol{r}\!_{\text{a}})$ and the constant-norm constraint C2. However, since C2 indicates that $\boldsymbol{r}\!_{\text{a}}$ lies on a 3D unit sphere manifold $\mathcal{M}\triangleq\{ \boldsymbol{s}|\boldsymbol{s}\in\mathbb{R}^{3\times 1},\| \boldsymbol{s}\|^2=1 \}$ and $g(\boldsymbol{r}\!_{\text{a}})$ is differentiable w.r.t. $\boldsymbol{r}\!_{\text{a}}$, \eqref{P7} is a manifold optimization problem defined on the smooth manifold $\mathcal{M}$ \cite{r32} and can be efficiently tackled by manifold optimization methods. Thus, we propose to adopt the Riemannian conjugate gradient (RCG) method to solve \eqref{P7}. By extending the classical conjugate gradient (CG) method to Riemannian manifolds, the RCG method aims to iteratively perform efficient descent steps along the Riemannian gradient by directly exploiting the geometric structure of the manifold. 

To this end, let us first derive the Euclidean gradient
\begin{align}
\nabla_{\boldsymbol{r}\!_{\text{a}}}g(\boldsymbol{r}\!_{\text{a}})\!=\!2\begin{bmatrix} \Re\{\text{tr} \{ (\nabla_{\boldsymbol{H}}g_1(\boldsymbol{r}\!_{\text{a}}))^{\mathrm{H}}\frac{\partial\boldsymbol{H}}{\partial r\!_{a,1}} \}\}\\ \Re\{\text{tr} \{ (\nabla_{\boldsymbol{H}}g_1(\boldsymbol{r}\!_{\text{a}}))^{\mathrm{H}}\frac{\partial\boldsymbol{H}}{\partial r\!_{a,2}}  \}\}\\ \Re\{\text{tr} \{ (\nabla_{\boldsymbol{H}}g_1(\boldsymbol{r}\!_{\text{a}}))^{\mathrm{H}}\frac{\partial\boldsymbol{H}}{\partial r\!_{a,3}}  \}\} \end{bmatrix}\!\!+\!\frac{(\boldsymbol{r}\!_{\text{a}}^\mathrm{T}\boldsymbol{r}\!_{\text{d}}+\rho_2z_2)\boldsymbol{r}\!_{\text{d}}}{\rho_2},
\label{EuG}
\end{align}
where $g_1(\boldsymbol{r}\!_{\text{a}})=\frac{\| \boldsymbol{X}-\boldsymbol{H}^\mathrm{H}\boldsymbol{W}+\rho_1 \boldsymbol{Z}_1 \|_{\text{F}}^2}{2\rho_1}$ and $\nabla_{\boldsymbol{H}}g_1(\boldsymbol{r}\!_{\text{a}})=-\frac{1}{2\rho_1}\boldsymbol{W}(\boldsymbol{X}-\boldsymbol{H}^\mathrm{H}\boldsymbol{W}+\rho_1 \boldsymbol{Z}_1)^{\mathrm{H}}$. $\frac{\partial\boldsymbol{H}}{\partial r\!_{a,i}}\in\mathbb{C}^{N\times K}$ represents the partial derivative of $\boldsymbol{H}$ w.r.t. the $i$-th element of $\boldsymbol{r}\!_{\text{a}}$ with
\begin{align}
{\frac{\partial\boldsymbol{H}}{\partial r\!_{a,i}}}_{(n,k)}=(n-1)\alpha\sqrt{\beta_k}e_kj\pi e^{j(n-1)\pi \boldsymbol{r}\!_{\text{a}}^\mathrm{T}\cdot \boldsymbol{d}_k}d_{k,i}.
\end{align}
However, since $\boldsymbol{r}\!_{\text{a}}$ is constrained to lie on $\mathcal{M}$ rather than in the Euclidean space, $\nabla_{\boldsymbol{r}\!_{\text{a}}}g(\boldsymbol{r}\!_{\text{a}})$ does not lead to a descent direction. In contrast, the Riemannian gradient $\text{grad}_{\boldsymbol{r}\!_{\text{a}}}g(\boldsymbol{r}\!_{\text{a}})$ defines the steepest descent direction on $\mathcal{M}$ by further projecting $\nabla_{\boldsymbol{r}\!_{\text{a}}}g(\boldsymbol{r}\!_{\text{a}})$ onto the tangent space of $\mathcal{M}$ with \cite{r32}
\begin{align}
\text{grad}_{\boldsymbol{r}\!_{\text{a}}}g(\boldsymbol{r}\!_{\text{a}})=(\boldsymbol{I}_3-\boldsymbol{r}\!_{\text{a}}\boldsymbol{r}\!_{\text{a}}^\mathrm{T})\nabla_{\boldsymbol{r}\!_{\text{a}}}g(\boldsymbol{r}\!_{\text{a}}).
\label{ReG}
\end{align}

Then, the RCG method performs descent along the Riemannian gradient in \eqref{ReG} and retracts the resulting point from the tangent space back onto the unit sphere manifold $\mathcal{M}$. Let $\boldsymbol{r}\!_{\text{a}}^{(l-1)}$ denote the point obtained in the $(l-1)$-th iteration of the RCG method. Then, the update rule for $\boldsymbol{r}\!_{\text{a}}$ is
\begin{align}
\boldsymbol{r}\!_{\text{a}}^{(l)}=\frac{\boldsymbol{r}\!_{\text{a}}^{(l-1)}-c_0^{(l)} \text{grad}_{\boldsymbol{r}\!_{\text{a}}}g(\boldsymbol{r}\!_{\text{a}}^{(l-1)})}{\| \boldsymbol{r}\!_{\text{a}}^{(l-1)}-c_0^{(l)} \text{grad}_{\boldsymbol{r}\!_{\text{a}}}g(\boldsymbol{r}\!_{\text{a}}^{(l-1)}) \|},
\label{UpdatingRCG}
\end{align}
where $c_0^{(l)}$ denotes the step size at the $l$-th iteration and is determined by a line search using, e.g., the Armijo backtracking method \cite{r33}. The overall algorithm for solving \eqref{P7} is summarized in Algorithm 1. As shown in \cite{r32}, the RCG method is guaranteed to converge to a critical point of \eqref{P7}.
\renewcommand{\algorithmicrequire}{\textbf{Input:}}
\begin{algorithm}[!t]
\footnotesize
\caption{The RCG Method for Solving Problem \eqref{P7}}
\begin{algorithmic}[1]
\REQUIRE $\boldsymbol{W}$, $\boldsymbol{X}$, $\boldsymbol{r}\!_{\text{d}}$, initial $\boldsymbol{r}\!_{\text{a}}^{(0)}$ and $l=1$.
\REPEAT
\STATE Calculate the Euclidean gradient $\nabla_{\boldsymbol{r}\!_{\text{a}}}g(\boldsymbol{r}\!_{\text{a}}^{(l-1)})$  with \eqref{EuG}.
\STATE Calculate the Riemannian gradient $\text{grad}_{\boldsymbol{r}\!_{\text{a}}}g(\boldsymbol{r}\!_{\text{a}}^{(l-1)})$ with \eqref{ReG}.
\STATE Choose a proper step size $\mu^{(l)}$ with line search.
\STATE Update $\boldsymbol{r}\!_{\text{a}}^{(l)}$ with \eqref{UpdatingRCG}, and $l=l+1$.
\UNTIL{$\| \text{grad}_{\boldsymbol{r}\!_{\text{a}}}g(\boldsymbol{r}\!_{\text{a}}^{(l)}) \|$} is sufficiently small.
\end{algorithmic}
\end{algorithm}
\subsubsection{Optimization of $\boldsymbol{r}\!_{\text{d}}$} For fixed $\{\boldsymbol{W},\boldsymbol{X},\boldsymbol{r}\!_{\text{a}}\}$, the subproblem of optimizing $\boldsymbol{r}\!_{\text{d}}$ is given by
\begin{equation}
\begin{aligned}
& \underset{\boldsymbol{r}\!_{\text{d}}}{\text{min}}
& & \tfrac{1}{2\rho_1} \| \boldsymbol{X} \!-\! \boldsymbol{H}^\mathrm{H}\boldsymbol{W} \!+\! \rho_1 \boldsymbol{Z}_1 \|_{\text{F}}^2  +\tfrac{1}{2\rho_2} |\boldsymbol{r}\!_{\text{a}}^\mathrm{T}\boldsymbol{r}\!_{\text{d}}+\rho_2z_2|^2 \\
& \text{ s.t.}
& & \text{C3.}
\end{aligned}
\label{P8}
\tag{P8}
\end{equation}
Since $\boldsymbol{r}\!_{\text{d}}$ also lies on a 3D unit sphere manifold $\mathcal{M}$, problem \eqref{P8} can be solved in a similar manner to \eqref{P7} using Algorithm~1. Due to the limited space, the details are omitted here.
\vspace{-0.6cm}
\subsection{Outer Iteration and Initial Solution}
As discussed in Section~III--B, in the inner iteration, variables $\{\boldsymbol{W}$, $\boldsymbol{X}$, $\boldsymbol{r}\!_{\text{a}}$, $\boldsymbol{r}\!_{\text{d}}\}$ are updated while the dual variables $\{\boldsymbol{Z}_1$, $z_2\}$ and penalty parameters $\{\rho_1$, $\rho_2\}$ are kept fixed. Upon convergence in the inner iteration, $\{\boldsymbol{Z}_1$, $z_2\}$ or $\{\rho_1$, $\rho_2\}$ are updated in the outer iteration to gradually enforce equality constraints C4 and C8, which are handled through the augmented Lagrangian penalties in \eqref{P2}.
\subsubsection{Update $\boldsymbol{Z}_1$, $z_2$, $\rho_1$ and $\rho_2$} Let $t_{\text{in}}$ and $t_{\text{out}}$ denote indices of the inner and outer iterations, respectively. After the inner loop has converged, in the $t_\text{out}$-th outer iteration, the constraint violations are first determined by calculating $\| \boldsymbol{X}-\boldsymbol{H}^\mathrm{H}\boldsymbol{W} \|_{\text{F}}^2$ and $|\boldsymbol{r}\!_{\text{a}}^\mathrm{T}\boldsymbol{r}\!_{\text{d}}|^2$. If $\| \boldsymbol{X}-\boldsymbol{H}^\mathrm{H}\boldsymbol{W} \|_{\text{F}}^2\leq \epsilon_1$ (or $|\boldsymbol{r}\!_{\text{a}}^\mathrm{T}\boldsymbol{r}\!_{\text{d}}|^2\leq \epsilon_2$) holds for given tolerance $\epsilon_1>0$ (or $\epsilon_2>0$), the current solution of $\{\boldsymbol{W}$, $\boldsymbol{X}$, $\boldsymbol{r}\!_{\text{a}}$, $\boldsymbol{r}\!_{\text{d}}\}$ is considered approximately satisfied for C8 (or C4), and only the dual variable $\boldsymbol{Z}_1$ (or $z_2$) is updated by accumulating the residual violation of C8 (or C4) as
\begin{align}
&\boldsymbol{Z}_1^{(t_{\text{out}}+1)}=\boldsymbol{Z}_1^{(t_{\text{out}})}+(\boldsymbol{X}-\boldsymbol{H}^\mathrm{H}\boldsymbol{W})/\rho_1^{(t_{\text{out}})} \label{Z1update} \\
(&\text{or } z_2^{(t_{\text{out}}+1)}=z_2^{(t_{\text{out}})}+(\boldsymbol{r}\!_{\text{a}}^\mathrm{T}\boldsymbol{r}\!_{\text{d}})/\rho_2^{(t_{\text{out}})} ),\label{z2update}
\end{align}
which will improve the quality of the overall solution in the next outer iteration. Otherwise, we decrease the penalty parameter $\rho_1$ (or $\rho_2$) as $\rho_1^{(t_{\text{out}}+1)}=c_1\rho_1^{(t_{\text{out}})}$ (or $\rho_2^{(t_{\text{out}}+1)}=c_2\rho_2^{(t_{\text{out}})}$), where $0<c_1<1$ (or $0<c_2<1$), to impose a stronger penalty on constraint C8 (or C4) and improve the feasibility of solution in the next outer iteration.
\renewcommand{\algorithmicrequire}{\textbf{Input:}}
\begin{algorithm}[!t]
\footnotesize
\caption{The PDD Algorithm for Solving Problem \eqref{P2}}
\begin{algorithmic}[1]
\REQUIRE Initialize $t_{\text{out}}=0$, $\boldsymbol{W}^{(t_{\text{out}})}$, $\boldsymbol{X}^{(t_{\text{out}})}$, $\boldsymbol{r}\!_{\text{d}}^{(t_{\text{out}})}$, $\boldsymbol{r}\!_{\text{a}}^{(t_{\text{out}})}$, $\boldsymbol{Z}_1^{(t_{\text{out}})}$, $z_2^{(t_{\text{out}})}$, $\rho_1^{(t_{\text{out}})}$, $\rho_2^{(t_{\text{out}})}$, set the tolerances $\epsilon_1$, $\epsilon_2$, $\epsilon_3$, $\epsilon_4$, the reduction factors $c_1$, $c_2$, the minimum penalty parameters $\rho_{1,\text{min}}$, $\rho_{2,\text{min}}$ and the maximum number of inner and outer iterations $T_{\text{in}}^{\text{max}}$ and $T_{\text{out}}^{\text{max}}$, respectively.
\REPEAT
\STATE Initialize $t_{\text{in}}=0$, $\boldsymbol{W}^{(t_{\text{out}},t_{\text{in}})}=\boldsymbol{W}^{(t_{\text{out}})}$, $\boldsymbol{X}^{(t_{\text{out}},t_{\text{in}})}=\boldsymbol{X}^{(t_{\text{out}})}$, $\boldsymbol{r}\!_{\text{a}}^{(t_{\text{out}},t_{\text{in}})}=\boldsymbol{r}\!_{\text{a}}^{(t_{\text{out}})}$, and $\boldsymbol{r}\!_{\text{d}}^{(t_{\text{out}},t_{\text{in}})}=\boldsymbol{r}\!_{\text{d}}^{(t_{\text{out}})}$.
\REPEAT
\STATE Update $\boldsymbol{W}^{(t_{\text{out}},t_{\text{in}}+1)}$ by solving problem \eqref{P3}.
\STATE Update $\boldsymbol{X}^{(t_{\text{out}},t_{\text{in}}+1)}$ by solving problem \eqref{P4}.
\STATE Update $\boldsymbol{r}\!_{\text{a}}^{(t_{\text{out}},t_{\text{in}}+1)}$ by solving problem \eqref{P7}.
\STATE Update $\boldsymbol{r}\!_{\text{d}}^{(t_{\text{out}},t_{\text{in}}+1)}$ by solving problem \eqref{P8}.
\STATE $t_{\text{in}}=t_{\text{in}}+1$.
\UNTIL $|\mathcal{L}_{\text{PDD}}^{(t_{\text{out}},t_{\text{in}})}-\mathcal{L}_{\text{PDD}}^{(t_{\text{out}},t_{\text{in}}-1)}|\leq \epsilon_3$ or $t_{\text{in}}>T_{\text{in}}^{\text{max}}$.
\STATE \textbf{If} $(\| \boldsymbol{X}-\boldsymbol{H}^\mathrm{H}\boldsymbol{W} \|_{\text{F}}^2)^{(t_{\text{out}})}<\epsilon_1$, \textbf{then} update $\boldsymbol{Z}_1^{(t_{\text{out}}+1)}$ with \eqref{Z1update}; \textbf{else if} $\rho_1^{(t_{\text{out}})}>\rho_{1,\text{min}}$, \textbf{then} $\rho_1^{(t_{\text{out}}+1)}=c_1\rho_1^{(t_{\text{out}})}$; \textbf{else} $\rho_1^{(t_{\text{out}}+1)}=\rho_{1,\text{min}}$.
\STATE \textbf{If} $(|\boldsymbol{r}\!_{\text{a}}^\mathrm{T}\boldsymbol{r}\!_{\text{d}}|^2)^{(t_{\text{out}})}<\epsilon_2$, \textbf{then} update $z_2^{(t_{\text{out}}+1)}$ with \eqref{z2update}; \textbf{else if} $\rho_2^{(t_{\text{out}})}>\rho_{2,\text{min}}$, \textbf{then} $\rho_2^{(t_{\text{out}}+1)}=c_2\rho_2^{(t_{\text{out}})}$; \textbf{else} $\rho_2^{(t_{\text{out}}+1)}=\rho_{2,\text{min}}$.
\STATE $t_{\text{out}}=t_{\text{out}}+1$.
\UNTIL $|\mathcal{L}_{\text{PDD}}^{(t_{\text{out}})}-\mathcal{L}_{\text{PDD}}^{(t_{\text{out}}-1)}|\leq \epsilon_4$ and $(\| \boldsymbol{X}-\boldsymbol{H}^\mathrm{H}\boldsymbol{W} \|_{\text{F}}^2)^{(t_{\text{out}})}\leq\epsilon_1$ and $(|\boldsymbol{r}\!_{\text{a}}^T\boldsymbol{r}\!_{\text{d}}|^2)^{(t_{\text{out}})}\leq\epsilon_2$, or $t_{\text{out}}>T_{\text{out}}^{\text{max}}$.
\end{algorithmic}
\end{algorithm}
\subsubsection{Initialization of $\boldsymbol{X}$, $\boldsymbol{W}$, $\boldsymbol{r}\!_{\text{a}}$, and $\boldsymbol{r}\!_{\text{d}}$}
An initial solution is needed to start the double-loop iteration, and an effective initial solution of \eqref{P2} can accelerate the convergence of the PDD algorithm and improve the resulting sum-rate performance. It has been shown in \cite{r11} that, by orientation design, an RAA composed of an infinite number of antennas can always ensure the pairwise orthogonality of the users' channels, eliminating inter-user channel correlation. Although this result is generally unattainable for an RAA composed of a finite number of antennas, it suggests a practical and effective initial design for RAA orientation by minimizing the inter-user channel correlation. Inspired by this, we initialize $\boldsymbol{r}\!_{\text{a}}$ heuristically as
\begin{equation}
\begin{aligned}
& \underset{\boldsymbol{r}\!_{\text{a}}}{\text{min}}
& & \sum\nolimits_{k=1}^{K}\sum\nolimits_{m=k+1}^{K}|\tilde{\boldsymbol{a}}_k^\mathrm{T}\tilde{\boldsymbol{a}}_m|^2 \\
& \text{ s.t.}
& & \text{C2,}
\end{aligned}
\label{P9}
\tag{P9}
\end{equation}
where $|\tilde{\boldsymbol{a}}_k^\mathrm{T}\tilde{\boldsymbol{a}}_m|^2$ represents the channel correlation between users $k$ and $m$. Problem \eqref{P9} can be addressed using the RCG method as in Algorithm~1. Meanwhile, an arbitrary unit vector orthogonal to the initial $\boldsymbol{r}\!_{\text{a}}$ provides an initial $\boldsymbol{r}\!_{\text{d}}$ satisfying constraint C4. Given the initial $\boldsymbol{r}\!_{\text{a}}$ and $\boldsymbol{r}\!_{\text{d}}$, finding an initial $\boldsymbol{W}$ that satisfies constraints C1 and C5 can be formulated as a second-order cone programming (SOCP) problem \cite{r34} and efficiently solved by the interior-point method. Finally, $\boldsymbol{X}$ is initialized as $ \boldsymbol{X}=\boldsymbol{H}^\mathrm{H}\boldsymbol{W}$ to satisfy constraint C8.
\vspace{-0.5cm}
\subsection{Overall Solution of \eqref{P2} and Computational Complexity}
The PDD algorithm terminates when the objective $\mathcal{L}_\text{PDD}$ of \eqref{P2} converges and the constraint violations become negligible within tolerances, i.e., $\| \boldsymbol{X}-\boldsymbol{H}^\mathrm{H}\boldsymbol{W} \|_{\text{F}}^2\leq \epsilon_1$ and $|\boldsymbol{r}\!_{\text{a}}^\mathrm{T}\boldsymbol{r}\!_{\text{d}}|^2\leq \epsilon_2$ are satisfied. The overall PDD algorithm for solving \eqref{P2} is summarized in Algorithm 2. As can be inferred from \cite{r25,r26,r27}, Algorithm~2 is guaranteed to converge to a KKT solution of \eqref{P2} and \eqref{P1} when $\epsilon_1, \epsilon_2$ are sufficiently small. In each inner iteration, the complexities of solving the subproblems for updating $\boldsymbol{W}$, $\boldsymbol{X}$, $\boldsymbol{r}\!_{\text{a}}$, and $\boldsymbol{r}\!_{\text{d}}$ are $O(KN^2)$, $O(K^2N)$, $O(K^2N)$, and $O(K^2N)$, respectively, where $O(\cdot)$ denotes the big-$O$ notation. As a result, the overall complexity of Algorithm 2 is $O(T_\text{total}(KN^2+K^2N))$, where $T_\text{total}$ denotes the total number of inner iterations summed over all outer iterations.
\vspace{-0.3cm}
\section{DL Framework for Problem \eqref{P1}}
\begin{figure*}[t]
\centering \includegraphics[width=1.65\columnwidth]{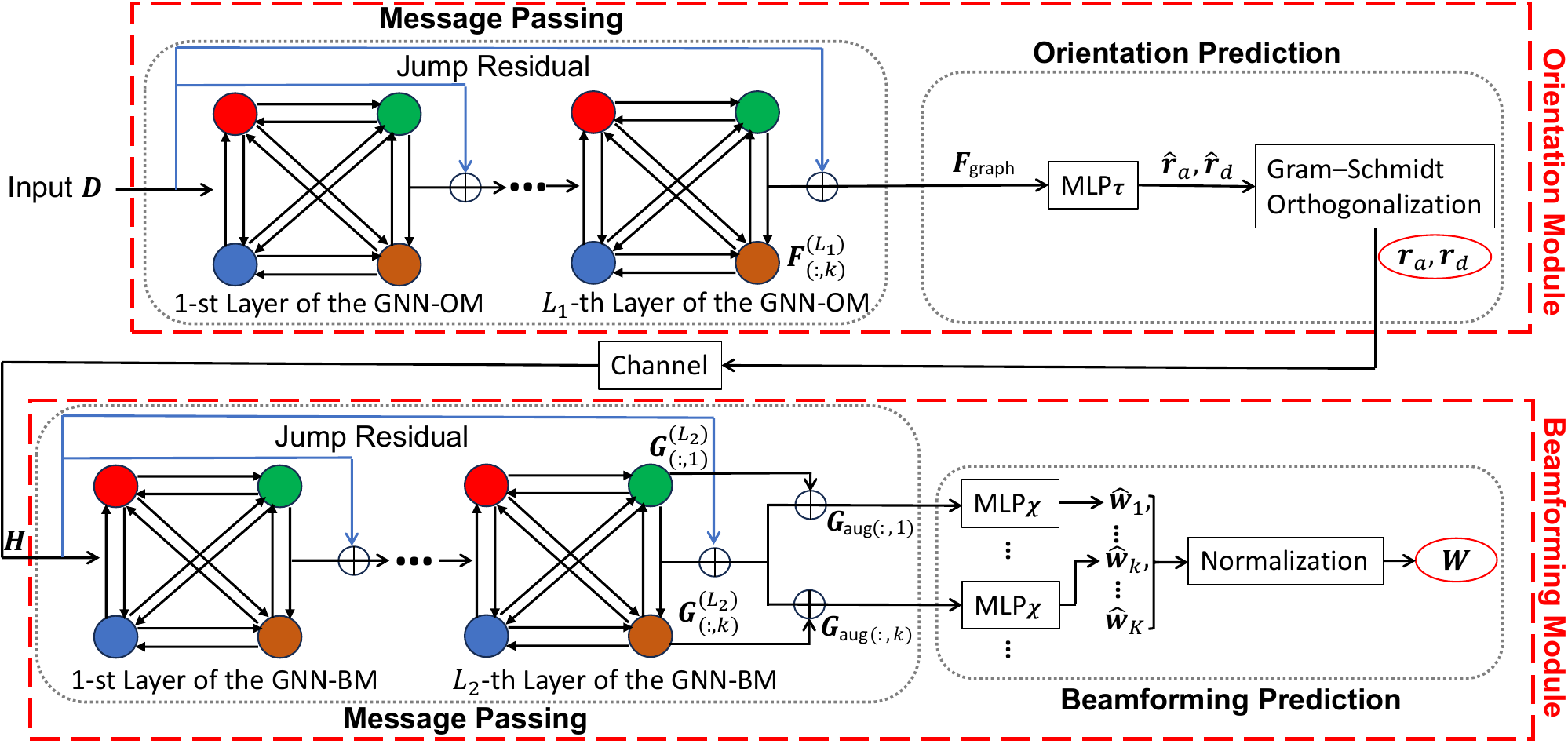} 
\vspace{-0.3cm}
\caption{The architecture of the proposed GNN-based DL framework for joint RAA orientation and beamforming design.}
\vspace{-0.5cm}
\label{GNN}
\end{figure*}
In Section III, an optimization framework based on the PDD method was proposed to achieve a KKT solution of problem \eqref{P1}. However, the convergence of the double-loop iteration may require a long computation time. To accelerate the joint design of RAA orientation and beamforming for UAV-aided communication, in this section, we propose a DL framework to solve \eqref{P1}. Instead of iteratively computing a specific decision of $\{\boldsymbol{r}\!_{\text{a}},\boldsymbol{r}\!_{\text{d}},\boldsymbol{W}\}$ for each configuration of multiuser positions as is done under the optimization framework, the DL framework aims to learn offline, exploiting deep neural networks (DNNs), the rule for mapping all possible user positions to the associated optimal decisions. Then, in online deployment, the predicted $\{\boldsymbol{r}\!_{\text{a}},\boldsymbol{r}\!_{\text{d}},\boldsymbol{W}\}$ are directly output based on the learned mapping function, which requires only a constant number of computations determined by the DNN architecture without any iterations. As a result, the DL framework has the potential to significantly reduce the computation time required for solving \eqref{P1}. However, to approach the performance of the optimization framework with DL, a central challenge lies in how to meticulously design the DNN architecture and training to effectively represent \eqref{P1} and capture its solutions. This problem will be addressed in this section.
\vspace{-0.4cm}
\subsection{GNN-based Architecture of the Proposed DL Framework}
We propose to model the system as a graph and use GNNs to solve \eqref{P1}. This is motivated by the fact that the design of $\{\boldsymbol{r}\!_{\text{a}},\boldsymbol{r}\!_{\text{d}},\boldsymbol{W}\}$ depends on the relative spatial positions of the users and the UAV (cf. Section~II), which naturally form a graph topology. Moreover, for the joint design problem \eqref{P1}, the RAA orientation $\{\boldsymbol{r}\!_{\text{a}},\boldsymbol{r}\!_{\text{d}}\}$ and beamforming matrix $\boldsymbol{W}$ exhibit \emph{invariance} and \emph{equivariance} w.r.t. user permutations, respectively. Specifically, define $\boldsymbol{D}\triangleq[\boldsymbol{u}_1-\boldsymbol{p},...,\boldsymbol{u}_K-\boldsymbol{p}]\in\mathbb{R}^{3\times K}$ as the position matrix of the $K$ users relative to the UAV, and let $\mathcal{\pi}_{m,n}(\boldsymbol{Y})$ denote the permutation matrix that swaps the $m$-th and $n$-th columns of matrix $\boldsymbol{Y}$. Given an optimal solution, denoted by $\{\boldsymbol{r}\!_{\text{a}}^{\star},\boldsymbol{r}\!_{\text{d}}^{\star},\boldsymbol{W}^{\star}\}$, of \eqref{P1} for relative user positions $\boldsymbol{D}$, the permutation invariance and equivariance properties imply that $\{\boldsymbol{r}\!_{\text{a}}^{\star},\boldsymbol{r}\!_{\text{d}}^{\star},\mathcal{\pi}_{m,n}(\boldsymbol{W}^{\star})\}$ constitutes an optimal solution of \eqref{P1} under the permuted positions $\mathcal{\pi}_{m,n}(\boldsymbol{D})$. Compared with other DNNs, GNNs not only well preserve these properties of \eqref{P1}, but also exploit them to improve the efficiency of offline training and enhance the generalization capability during online deployment of the DL framework.

However, unlike the existing GNN-based multiuser beamforming designs in \cite{r17,r18,r19}, a tailored GNN-based structure and training strategy need to be further developed in this paper in order to address the gradient imbalance affecting offline training. As shown in \eqref{phase_effect} and \eqref{SINR}, $\boldsymbol{r}_{\text{a}}$ and $\boldsymbol{r}_{\text{d}}$ reshape the channels through the complex exponential terms and amplitudes in the steering vectors, respectively, while $\boldsymbol{W}$ directly scales the transmit power for each user based on their channels. Thus, $\{\boldsymbol{r}\!_{\text{a}},\boldsymbol{r}\!_{\text{d}}\}$ and $\boldsymbol{W}$ impact the achievable sum-rate via distinct mechanisms and on different scales. When using gradient descent to train the network, the gradient w.r.t. $\boldsymbol{W}$ often has a much larger magnitude than that w.r.t. $\{\boldsymbol{r}\!_{\text{a}},\boldsymbol{r}\!_{\text{d}}\}$, which impedes the optimization of the RAA orientation. To overcome this problem, we develop a two-module GNN architecture that separates the designs of $\{\boldsymbol{r}\!_{\text{a}},\boldsymbol{r}\!_{\text{d}}\}$ and $\boldsymbol{W}$ into two modules, together with a two-stage training strategy that provides flexible control for training both modules.

Fig.~\ref{GNN} depicts the overall architecture of the proposed DL framework, which consists of an orientation module (OM) and a beamforming module (BM) to sequentially output the predicted $\{\boldsymbol{r}\!_{\text{a}},\boldsymbol{r}\!_{\text{d}}\}$ and $\boldsymbol{W}$, respectively. In each module, we model the considered communication system as a fully connected directed graph \cite{r19}, denoted by $\mathcal{G}_\text{OM}$ ($\mathcal{G}_\text{BM}$) for the OM (BM), with $K$ nodes and $K(K-1)$ directed edges. Specifically, $\mathcal{G}_\text{OM}$ and $\mathcal{G}_\text{BM}$ capture each user $k$ as the $k$-th node in the graph, taking user $k$'s relative position $\boldsymbol{D}_{(k,:)}$ and channel vector $\boldsymbol{h}_k$ as the feature of the $k$-th node. Subsequently, GNNs are employed to process the node features of $\mathcal{G}_\text{OM}$ and $\mathcal{G}_\text{BM}$ via the message passing mechanism \cite{GNN}, aiming to extract useful features that are embedded in the input and beneficial for designing $\{\boldsymbol{r}\!_{\text{a}},\boldsymbol{r}\!_{\text{d}}\}$ and $\boldsymbol{W}$, respectively. We refer to the GNN of the OM (BM) as GNN-OM (GNN-BM). Finally, two multilayer perceptrons (MLPs) are used to predict $\{\boldsymbol{r}\!_{\text{a}},\boldsymbol{r}\!_{\text{d}}\}$ and $\boldsymbol{W}$ from the embedded features produced by the GNN-OM and GNN-BM, respectively. The architecture of each module as well as the message passing and network training procedure are further elaborated in Sections~IV--B to~IV--D.
\vspace{-0.4cm}
\subsection{The RAA Orientation Module}
As shown in Fig.~\ref{GNN}, the OM takes node feature matrix $\boldsymbol{D}$ associated with $\mathcal{G}_\text{OM}$ as input and maps it to RAA orientation $\{\boldsymbol{r}\!_{\text{a}},\boldsymbol{r}\!_{\text{d}}\}$ through message passing over an $L_1$-layer GNN-OM, followed by prediction using an MLP. We define the node feature matrix at the $l_1$-th layer of GNN-OM as $\boldsymbol{F}^{(l_1)}\in\mathbb{R}^{F(l_1)\times K}$ for $l_1=0,1,...,L_1$, where $l_1=0$ and $l_1=L_1$ correspond to the input and output layers of GNN-OM, respectively, and $F(l_1)$ denotes the dimension of node features. For our setting, $\boldsymbol{F}^{(0)}=\boldsymbol{D}$ and $F(0)=3$ dimensions. In the following, we present the message passing in GNN-OM and the prediction of $\{\boldsymbol{r}\!_{\text{a}},\boldsymbol{r}\!_{\text{d}}\}$ using the MLP step by step.
\subsubsection{Message Passing in GNN-OM}
In the message passing in the $l_1$-th layer of GNN-OM, the feature $\boldsymbol{F}^{(l_1)}_{(:,k)}$ of the $k$-th node is updated by \emph{aggregating} the features $\boldsymbol{F}^{(l_1\!-\!1)}$ of its neighboring nodes in the previous layer. To distinguish the influence of different neighbors, an attention mechanism is further incorporated into the message passing, enabling the $k$-th node to assign different weights to its neighbors. We assume $A(l_1)$ attention heads are applied in the $l_1$-th layer of GNN-OM, whose outputs are subsequently concatenated via the operator $\mathbin{/\mkern-5mu/}$. The concatenation preserves the distinct features of each head, yielding the aggregated feature $\tilde{\boldsymbol{F}}^{(l_1)}_{(:,k)}\in\mathbb{R}^{F(l_1)\times 1}$ of the $k$-th node in the $l_1$-th layer of GNN-OM given by
\begin{align}
\tilde{\boldsymbol{F}}^{(l_1)}_{(:,k)}=\mathbin{/\mkern-5mu/}_{i=1}^{A(l_1)}(\sum\nolimits_{m=1}^{K}a^{(l_1)}_{k,m,i}\boldsymbol{\Omega}_i^{(l_1)}\boldsymbol{F}^{(l_1-1)}_{(:,m)}).
\label{AGG}
\end{align}
In \eqref{AGG}, $\boldsymbol{\Omega}_i^{(l_1)}\in\mathbb{R}^{\frac{F(l_1)}{A(l_1)}\times F(l_1-1)}$ denotes the learnable matrix of the $i$-th attention head used for the aggregation. $a^{(l_1)}_{k,m,i}$ is the attention weight of the $i$-th attention head for aggregating features at the $k$-th node from the $m$-th node, given as \cite{GNN}
\begin{align}
a^{(l_1)}_{k,m,i}=\frac{ e^{ \boldsymbol{A}^{(l_1)}_{(i,:)}\left(\boldsymbol{\Omega}_i^{(l_1)}\boldsymbol{F}^{(l_1-1)}_{(:,k)}\mathbin{/\mkern-5mu/}\boldsymbol{\Omega}_i^{(l_1)}\boldsymbol{F}^{(l_1-1)}_{(:,m)}\right)} }{\sum\nolimits_{q=1}^{K}e^{ \boldsymbol{A}^{(l_1)}_{(i,:)}\left(\boldsymbol{\Omega}_i^{(l_1)}\boldsymbol{F}^{(l_1-1)}_{(:,k)}\mathbin{/\mkern-5mu/}\boldsymbol{\Omega}_i^{(l_1)}\boldsymbol{F}^{(l_1-1)}_{(:,q)}\right)}},
\end{align}
where $\boldsymbol{A}^{(l_1)}\in\mathbb{R}^{A(l_1)\times \frac{2F(l_1)}{A(l_1)}}$ denotes the learnable matrix for the multi-head attention mechanism, with its row vector $\boldsymbol{A}^{(l_1)}_{(i,:)}$ applied at the $i$-th head. However, due to successive message passing in \eqref{AGG}, GNN-OM may suffer from over-smoothing, where the features of different nodes become increasingly homogeneous, impairing GNN's ability to distinguish among users and degrading the system performance \cite{r19}. To mitigate this, we introduce a jump residual from the input $\boldsymbol{D}$ in each layer of GNN-OM to preserve the original node features. As a result, the feature of the $k$-th node in the $l_1$-th layer of GNN-OM is updated as
\begin{align}
\boldsymbol{F}_{(:,k)}^{(l_1)}=\tilde{\boldsymbol{F}}^{(l_1)}_{(:,k)}+\boldsymbol{\Omega}_\text{jump}^{(l_1)}\boldsymbol{F}^{(0)}_{(:,k)},
\label{F_update}
\end{align}
where $\boldsymbol{\Omega}_\text{jump}^{(l_1)}\!\in\!\mathbb{R}^{F(l_1)\times 3}$ is the learnable matrix for the residual.
\subsubsection{RAA Orientation Prediction}
Unfortunately, directly feeding $\boldsymbol{F}^{(L_1)}$ into an MLP will disrupt the permutation invariance of orientation design, because a permuted input $\mathcal{\pi}_{m,n}(\boldsymbol{D})$ of GNN-OM only leads to equivariantly permuted embedded features $\mathcal{\pi}_{m,n}(\boldsymbol{F}^{(L_1)})$. To tackle this, we instead feed a graph-level representation $\boldsymbol{F}_\text{graph}$ of $\boldsymbol{F}^{(L_1)}$ into the MLP. Specifically, $\boldsymbol{F}_\text{graph}$ is obtained from $\boldsymbol{F}^{(L_1)}$ via a permutation-invariant mapping $\mathcal{P}_{\text{OM}}(\cdot)$, defined as
\begin{align}
\boldsymbol{F}_\text{graph}=\mathcal{P}_{\text{OM}}(\boldsymbol{F}^{(L_1)})=\mathcal{P}_\text{mean}(\boldsymbol{F}^{(L_1)})\mathbin{/\mkern-5mu/}\mathcal{P}_\text{max}(\boldsymbol{F}^{(L_1)}),
\end{align}
where $\boldsymbol{F}_\text{graph}\in\mathbb{R}^{2F(L_1)\times 1}$, $\mathcal{P}_\text{mean}(\boldsymbol{F}^{(L_1)})=\frac{1}{K}\sum\nolimits_{k=1}^{K}\boldsymbol{F}^{(L_1)}_{(:,k)}$, and $\mathcal{P}_{\text{max}}(\boldsymbol{F}^{(L_1)})$ takes the maximum value of each row of $\boldsymbol{F}^{(L_1)}$. This ensures $\boldsymbol{F}_\text{graph}$ to be invariant to user permutations. Finally, an MLP with learnable parameters $\boldsymbol{\tau}$, denoted by $ f_{\boldsymbol{\tau}}(\cdot)$, is employed to predict $\{\boldsymbol{r}\!_{\text{a}},\boldsymbol{r}\!_{\text{d}}\}$ by
\begin{align}
[\hat{\boldsymbol{r}}\!_{\text{a}}^\mathrm{T},\hat{\boldsymbol{r}}\!_{\text{d}}^\mathrm{T}]^\mathrm{T}=f_{\boldsymbol{\tau}}(\boldsymbol{F}_{\text{graph}})\in\mathbb{R}^{6\times 1}.
\label{Oripred}
\end{align}
However, $\hat{\boldsymbol{r}}\!_{\text{a}}$ and $\hat{\boldsymbol{r}}\!_{\text{d}}$ in \eqref{Oripred} are not necessarily orthonormal. To satisfy constraints C2--C4, Gram-Schmidt orthogonalization is adopted at the output of the OM, resulting in
\begin{align}
\boldsymbol{r}\!_{\text{a}}=\frac{\hat{\boldsymbol{r}}\!_{\text{a}}}{\|\hat{\boldsymbol{r}}\!_{\text{a}}\|}, \ \text{}
\boldsymbol{r}\!_{\text{d}}=\frac{\hat{\boldsymbol{r}}\!_{\text{d}}-(\hat{\boldsymbol{r}}\!_{\text{d}}^\mathrm{T}\boldsymbol{r}\!_{\text{a}})\boldsymbol{r}\!_{\text{a}}}{\| \hat{\boldsymbol{r}}\!_{\text{d}}-(\hat{\boldsymbol{r}}\!_{\text{d}}^\mathrm{T}\boldsymbol{r}\!_{\text{a}})\boldsymbol{r}\!_{\text{a}} \|}.
\end{align}
\subsection{The Beamforming Module}
For given $\{\boldsymbol{r}\!_{\text{a}},\boldsymbol{r}\!_{\text{d}}\}$ output from the OM, the resulting channel vector $\boldsymbol{h}_k$ can be directly computed via \eqref{channel}. Since all GNNs in this paper are real-valued, we represent the complex-valued $\boldsymbol{h}_k$ by stacking its real and imaginary parts to preserve the full channel information. Therefore, we define
\begin{align}
\tilde{\boldsymbol{H}}=\begin{bmatrix}
\Re\{\boldsymbol{h}_1\}...\Re\{\boldsymbol{h}_k\}...\Re\{\boldsymbol{h}_K\} \\
\Im\{\boldsymbol{h}_1\}...\Im\{\boldsymbol{h}_k\}... \Im\{\boldsymbol{h}_K\}
\end{bmatrix}\in\mathbb{R}^{2N\times K}
\end{align}
as the node feature matrix of $\mathcal{G}_\text{BM}$. Subsequently, the GNN-BM with $L_2$ layers is employed to extract the embedded features of $\tilde{\boldsymbol{H}}$ using message passing. This is elaborated below along with the prediction of $\boldsymbol{W}$ using an MLP.
\subsubsection{Message Passing in GNN-BM} Let us define the node feature matrix in the $l_2$-th layer of GNN-BM as $\boldsymbol{G}^{(l_2)}\in\mathbb{R}^{G(l_2)\times K}$ for $l_2=0,1,...,L_2$, where $G(l_2)$ denotes the dimension of node features. The input of GNN-BM is $\boldsymbol{G}^{(0)}=\tilde{\boldsymbol{H}}$ with $G(0)=2N$ dimensions. Similar to GNN-OM, the message passing in the $l_2$-th layer of GNN-BM employs $B(l_2)$ attention heads and a jump residual to distinguish the
influence of different neighbors at each node while mitigating the over-smoothing issue. Consequently, the feature of the $k$-th node in the $l_2$-th layer of the GNN-BM is updated as
\begin{align}
\boldsymbol{G}^{(l_2)}_{(:,k)} =\mathbin{/\mkern-5mu/}_{i=1}^{B(l_2)}(\sum\nolimits_{m=1}^{K}b^{(l_2)}_{k,m,i}\boldsymbol{\Phi}_i^{(l_2)}\boldsymbol{G}^{(l_2-1)}_{(:,m)})+\boldsymbol{\Phi}_\text{jump}^{(l_2)}\boldsymbol{G}^{(0)}_{(:,k)},
\label{G_update}
\end{align}
where $\boldsymbol{\Phi}_i^{(l_2)}\in\mathbb{R}^{\frac{G(l_2)}{B(l_2)}\times G(l_2-1)}$ denotes the learnable matrix for aggregation in the $i$-th attention head and $\Phi^{(l_2)}_{\text{jump}}\in\mathbb{R}^{G(l_2)\times 2N}$ denotes the learnable matrix for the jump residual. Define $\boldsymbol{B}^{(l_2)}\in\mathbb{R}^{B(l_2)\times \frac{2G(l_2)}{B(l_2)}}$ as the learnable matrix for the multi-head attention mechanism. The attention weight $b^{(l_2)}_{k,m,i}$ is given by
\begin{align}
b^{(l_2)}_{k,m,i}=\frac{ e^{ \boldsymbol{B}^{(l_2)}_{(i,:)}\left(\boldsymbol{\Phi}_i^{(l_2)}\boldsymbol{G}^{(l_2-1)}_{(:,k)}\mathbin{/\mkern-5mu/}\boldsymbol{\Phi}_i^{(l_2)}\boldsymbol{G}^{(l_2-1)}_{(:,m)}\right)} }{\sum\nolimits_{q=1}^{K}e^{ \boldsymbol{B}^{(l_2)}_{(i,:)}\left(\boldsymbol{\Phi}_i^{(l_2)}\boldsymbol{G}^{(l_2-1)}_{(:,k)}\mathbin{/\mkern-5mu/}\boldsymbol{\Phi}_i^{(l_2)}\boldsymbol{G}^{(l_2-1)}_{(:,q)}\right)}}.
\end{align}
\subsubsection{RAA Beamforming Prediction} As shown in \eqref{SINR}, the optimal $\boldsymbol{w}_{k}^{\star}$ of user $k$ depends not only on its own channel but also on the channels of other users through the interference terms in SINR. Thus, it is insufficient to predict $\boldsymbol{w}_k$ from the embedded features $\boldsymbol{G}^{(L_2)}_{(:,k)}$ of node $k$ alone. Instead, we augment the embedded features $\boldsymbol{G}^{(L_2)}_{(:,k)}$ of each node $k$ with the features of other nodes, resulting in a feature matrix $\boldsymbol{G}_{\text{aug}}\in\mathbb{R}^{3G(L_2)\times K}$, where each column $\boldsymbol{G}_{\text{aug}(:,k)}$ denotes the augmented embedded features of node $k$ and is given by
\begin{align}
\boldsymbol{G}_{\text{aug}(:,k)}=\boldsymbol{G}^{(L_2)}_{(:,k)}\!\mathbin{/\mkern-5mu/}\!\mathcal{P}_\text{mean}(\boldsymbol{G}^{(L_2)})\!\mathbin{/\mkern-5mu/}\!\mathcal{P}_\text{max}(\boldsymbol{G}^{(L_2)}).
\label{aug}
\end{align}
Note that, with \eqref{aug}, $\boldsymbol{G}_{\text{aug}}$ is equivariant to user permutations, preserving the permutation equivariance of the beamforming design. Finally, we apply an MLP $f_{\boldsymbol{\chi}}(\cdot)$ with learnable parameters $\boldsymbol{\chi}$ to each node $k$ separately by taking $\boldsymbol{G}_{\text{aug}(:,k)}$ as the input, and predict $\boldsymbol{w}_k$ via
\begin{align}
\begin{bmatrix}
\Re\{\hat{\boldsymbol{w}}_k \} \\
\Im\{\hat{\boldsymbol{w}}_k \}
\end{bmatrix}
=f_{\boldsymbol{\chi}}(\boldsymbol{G}_{\text{aug}(:,k)})\in\mathbb{R}^{2N\times 1}.
\end{align}
Define $\hat{\boldsymbol{W}}\!\triangleq\![\hat{\boldsymbol{w}}_1,...,\hat{\boldsymbol{w}}_K]\!\in\!\mathbb{R}^{N\!\times\! K}$. As $\hat{\boldsymbol{W}}$ may not satisfy constraint C1, a normalization layer is employed at the output of BM, resulting in
\begin{align}
\boldsymbol{W}=
\begin{cases}
{\hat{\boldsymbol{W}}} / {\| \hat{\boldsymbol{W}} \|_{\text{F}}}, & \text{if } \| \hat{\boldsymbol{W}} \|_{\text{F}}^2 > P,\\
\hat{\boldsymbol{W}},  & \text{otherwise}.
\end{cases}
\end{align}
\vspace{-0.4cm}
\subsection{The Two-stage Training Strategy}
For training the proposed GNN-based DL architecture, it would be intractable to obtain the optimal $\{\boldsymbol{r}\!_{\text{a}}^{\star},\boldsymbol{r}\!_{\text{d}}^{\star},\boldsymbol{W}^{\star}\}$ and use them as supervision labels. In fact, even the near-optimal solutions from Section~III are computationally costly to generate for training. For these reasons, we adopt unsupervised learning to train the OM and BM for maximizing the sum-rate subject to constraints C1--C4 and C6 without labels. While constraints C1 and C2--C4 have been easily handled by the normalization in BM and the orthogonalization in OM, respectively, the QoS constraint C6 is difficult to enforce through the module architecture itself. Instead, we incorporate C6 into the loss function as a penalty term via the Lagrangian duality method (LDM) \cite{r19}, which adaptively balances the sum-rate maximization and the satisfaction of C6 by updating the dual variable jointly with the learnable parameters during training. We use $\boldsymbol{\Omega}_{\text{OM}}$ and $\boldsymbol{\Phi}_{\text{BM}}$ to represent all learnable parameters of the OM and BM, respectively. Training the whole framework is equivalent to solving the following problem
\begin{equation}
\begin{aligned}
\underset{\boldsymbol{\mu}}{\text{max}} \underset{\boldsymbol{\Omega}_{\text{OM}}, \boldsymbol{\Phi}_{\text{BM}}}{\text{min}}\mathcal{L}_\text{training}=&-\sum\nolimits_{k=1}^{K}R_k \\
&+\sum\nolimits_{k=1}^{K}\mu_k\text{ReLU}(\overline{\gamma}_k-\gamma_k),
\end{aligned}
\label{P10}
\tag{P10}
\end{equation}
where $\boldsymbol{\mu}\triangleq[\mu_1,...,\mu_k,...,\mu_K]\in\mathbb{R}_{+}^{K\times 1}$ denotes the dual variable associated with C6, and $\text{ReLU}(\cdot)$ represents the rectified linear unit function with $\text{ReLU}(x)\triangleq\max(0,x)$. $\boldsymbol{\mu}$ and $\{ \boldsymbol{\Omega}_{\text{OM}}, \boldsymbol{\Phi}_{\text{BM}} \}$ are alternately updated during offline training.

As discussed in Section~IV--A, directly training the whole framework suffers from the gradient imbalance, where the optimization of $\boldsymbol{\Phi}_{\text{BM}}$ dominates that of $\boldsymbol{\Omega}_{\text{OM}}$. However, since $\boldsymbol{\Omega}_{\text{OM}}$ and $\boldsymbol{\Phi}_{\text{BM}}$ reside in two sequential modules, we can develop a two-stage training strategy, consisting of an OM pre-training stage followed by a joint training stage, to flexibly control the training of each module and mitigate gradient imbalance. The pre-training of the OM, which is independent of the BM, also obtains a heuristic yet effective initialization for $\boldsymbol{\Omega}_{\text{OM}}$, before both modules are jointly trained to maximize the sum-rate.
\subsubsection{Orientation Module Pre-training Stage} In this stage, only $\boldsymbol{\Omega}_{\text{OM}}$ is optimized, and the loss $\mathcal{L}_\text{training}$ cannot be used, since it depends on $\boldsymbol{W}$. Following the RAA orientation initialization strategy in the optimization framework, we minimize instead the pairwise inter-user channel correlation and formulate the pre-training problem as
\begin{equation}
\begin{aligned}
\underset{\boldsymbol{\Omega}_{\text{OM}}}{\text{min}} \ \mathcal{L}_\text{pre-training}=\sum\nolimits_{k=1}^{K}\sum\nolimits_{m=k+1}^{K}|\tilde{\boldsymbol{a}}_k^T\tilde{\boldsymbol{a}}_m|^2.
\end{aligned}
\label{P11}
\tag{P11}
\end{equation}
\subsubsection{Joint Training Stage} In this stage, we use the $\boldsymbol{\Omega}_{\text{OM}}$ obtained in the pre-training stage as the initialization for OM, and then jointly optimize both $\boldsymbol{\Omega}_{\text{OM}}$ and $\boldsymbol{\Phi}_{\text{BM}}$ to maximize the sum-rate as formulated in \eqref{P10}. The overall two-stage training procedure is outlined in Algorithm 3, where $\nabla_{\boldsymbol{\mu}}$ denotes the gradient accumulator for the dual variable.
\renewcommand{\algorithmicrequire}{\textbf{Input:}}
\begin{algorithm}[!t]
\footnotesize
\caption{Two-stage Training Strategy for Solving \eqref{P10}}
\begin{algorithmic}[1]
\REQUIRE Initial $\boldsymbol{\Omega}_{\text{OM}}^{(0)}$, $\boldsymbol{\Phi}_{\text{BM}}^{(0)}$, $\boldsymbol{\mu}^{(0)}$, step size $c$ for updating $\boldsymbol{\mu}$, training dataset $\mathcal{D}=\{\boldsymbol{D}\}_{i=1}^{M_1}$, number of mini-batches $B$, mini-batch size $M_2$, numbers of pre-training and joint-training epochs $E_1$ and $E_2$, respectively.
\STATE \textbf{Orientation Module Pre-training:}
\STATE Initialize $\boldsymbol{\Omega}_{\text{OM}}\leftarrow\boldsymbol{\Omega}_{\text{OM}}^{(0)}$
\FOR{epoch $e_1 = 1, \dots, E_1$}
\FOR{batch $b = 1, \dots, B$}
\STATE Sample the $b$-th mini-batch $\{\boldsymbol{D}\}_{i=1}^{M_2}$ from $\mathcal{D}$.
\STATE Compute the mini-batch loss $\frac{1}{M_2}\sum\nolimits_{i=1}^{M_2}\mathcal{L}_{\text{pre-training},i}$.
\STATE Update $\boldsymbol{\Omega}_{\text{OM}}$ with stochastic gradient descent.
\ENDFOR
\ENDFOR
\STATE Obtain the pre-trained parameters $\tilde{\boldsymbol{\Omega}}_{\text{OM}}$.
\STATE \textbf{Joint Training:}
\STATE Initialize $\boldsymbol{\Omega}_{\text{OM}}\leftarrow\tilde{\boldsymbol{\Omega}}_{\text{OM}}$, $\boldsymbol{\Phi}_{\text{BM}}\leftarrow\boldsymbol{\Phi}_{\text{BM}}^{(0)}$, and $\boldsymbol{\mu}\leftarrow\boldsymbol{\mu}^{(0)}$.
\FOR{epoch $e_2 = 1, \dots, E_2$}
\STATE Initialize $\nabla_{\boldsymbol{\mu}}=[\nabla_{\mu_{1}},...,\nabla_{\mu_{K}}]\leftarrow\boldsymbol{0}$.
\FOR{batch $b = 1, \dots, B$}
\STATE Sample the $b$-th mini-batch $\{\boldsymbol{D}\}_{i=1}^{M_2}$ from $\mathcal{D}$.
\STATE Compute the mini-batch loss $\frac{1}{M_2}\sum\nolimits_{i=1}^{M_2}\mathcal{L}_{\text{training},i}$.
\STATE Jointly update $\boldsymbol{\Omega}_{\text{OM}}$ and $\boldsymbol{\Phi}_{\text{BM}}$ with stochastic gradient descent.
\STATE $\nabla_{\mu_{k}}\leftarrow\nabla_{\mu_{k}}+\frac{1}{M_2}\sum\nolimits_{i=1}^{M_2}\text{ReLU}(\overline{\gamma}_k-\gamma_{k,i})$, for $k\in\mathcal{K}$.
\ENDFOR
\STATE $\boldsymbol{\mu}\leftarrow \boldsymbol{\mu}+c\nabla_{\boldsymbol{\mu}}$.
\ENDFOR
\end{algorithmic}
\end{algorithm}
Note that our proposed GNN-based DL framework preserves the permutation invariance and equivariance of the joint design problem \eqref{P1}, which improves the efficiency of offline training and the generalization during online deployment of the DL framework \cite{GNN}. Additionally, it enables the proposed DL framework to support a varying number of users without changing the dimensions of the learnable parameters. 
\vspace{-0.35cm}
\section{Simulation Results}
\vspace{-0.05cm}
In this section, we evaluate the performance of RAA-based UAV-aided communication using the proposed optimization and DL frameworks by simulation. We equip the UAV with an RAA composed of $N=16$ half-wavelength dipole antennas to serve $K=4$ terrestrial users, where each user employs a single isotropic receive antenna. The users are uniformly and randomly distributed within a square area of $100 \ \text{m}\times 100 \ \text{m}$, centered at the origin $[0,0,0]^\text{T}$ m, and the UAV hovers at position $\boldsymbol{p}=[0,0,40]^\text{T}$ m to ensure reliable LoS links for all users within the entire service area. The total transmit power is set to $P=1$~W and the effective noise power spectral density is assumed to be $-160$ dBm/Hz over a bandwidth of $10$ MHz \cite{r14,r19}. Finally, the required minimum achievable rate for each user is set to $\overline{R}_k=4$ nat/s/Hz for all $k\in \mathcal{K}$.

In the proposed optimization framework, the relevant algorithmic hyperparameters are set as $\epsilon_1=\epsilon_2=\epsilon_4=10^{-6}$, $\epsilon_3=10^{-4}$, $c_1=0.95$, $c_2=0.8$, $T_{\text{in}}^{\text{max}}=T_{\text{in}}^{\text{in}}=300$, $\rho_1^{(0)}=\rho_2^{(0)}=0.5$, $z_2^{(0)}=0$, and $\boldsymbol{Z}_1^{(0)}=\mathbf{0}$. In the proposed DL framework, the learning rate is initialized
to $10^{-5}$ for both pre-training and joint training, and the step size $c$ of updating $\boldsymbol{\mu}$ is set to 0.02. The batch size $B$ is set to 32 for $E_1=500$ epochs of pre-training and $E_2=100$ epochs of joint training. The sizes of the training and validation sets are chosen as 100,000 and 1,0000, respectively. Table~\ref{parameters} shows the implementation details of OM and BM. All simulations are conducted in Python 3.10.19 with PyTorch 2.9.0 on a workstation equipped with an Intel(R) Core(TM) i9-13900HX CPU and an NVIDIA RTX 4060 GPU.
\begin{table}[!t]
\begin{center}
\caption{Architecture of the Proposed DL Framework.}
\label{parameters}
\begin{tabular}{| c | c | c | c | c | c |}
\hline
\multicolumn{3}{|c|}{Orientation Module} & \multicolumn{3}{c|}{Beamforming Module} \\
\hline
Name & Dimension & \#AHs & Name & Dimension & \#AHs \\
\hline
$\boldsymbol{\Omega}_i^{(1)}$ & $64\times 3$ & 10 & $\boldsymbol{\Phi}_i^{(1)}$ & $32\times 32$ & 10 \\
\hline
$\boldsymbol{A}^{(1)}$ & $10 \times 128$ & \diagbox{}{} &  $\boldsymbol{B}^{(1)}$ & $10 \times 64$ & \diagbox{}{}\\
\hline
$\boldsymbol{\Omega}_\text{jump}^{(1)}$ & $640\times 3$ & \diagbox{}{} & $\boldsymbol{\Phi}_\text{jump}^{(1)}$ & $320\times 32$ & \diagbox{}{} \\
\hline
\multicolumn{3}{|c|}{\multirow{6}{*}{\diagbox{}{}}} & $\boldsymbol{\Phi}_i^{(2)}$ & $64\times 320$ & 10 \\
\cline{4-6}
\multicolumn{3}{|c|}{} &
$\boldsymbol{B}^{(2)}$ & $10 \times 128$ & \diagbox{}{}\\
\cline{4-6}
\multicolumn{3}{|c|}{} &
$\boldsymbol{\Phi}_\text{jump}^{(2)}$ & $640\times 32$ & \diagbox{}{} \\
\cline{4-6}
\multicolumn{3}{|c|}{} &
$\boldsymbol{\Phi}_i^{(3)}$ & $128\times 640$ & 10 \\
\cline{4-6}
\multicolumn{3}{|c|}{} &
$\boldsymbol{B}^{(3)}$ & $10 \times 256$ & \diagbox{}{}\\
\cline{4-6}
\multicolumn{3}{|c|}{} &
$\boldsymbol{\Phi}_\text{jump}^{(3)}$ & $1280\times 32$ & \diagbox{}{} \\
\hline
\multirow{3}{*}{$\text{MLP}_{\boldsymbol{\tau}}$} & $1280\times1024$ & \diagbox{}{} & \multirow{4}{*}{$\text{MLP}_{\boldsymbol{\chi}}$} & $3840\times 1024$ & \diagbox{}{}\\
\cline{2-3} \cline{5-6}
 & $1024\times512$ & \diagbox{}{} & & $1024\times512$ & \diagbox{}{}\\
\cline{2-3} \cline{5-6}
 & $512\times6$ & \diagbox{}{} & & $512\times256$ & \diagbox{}{} \\
\cline{1-3}\cline{5-6}
\multicolumn{3}{|c|}{\diagbox{}{}} & & $256\times32$ & \diagbox{}{}\\ 
\hline
\end{tabular}
\\[6pt]
\parbox{0.9\linewidth}{
\footnotesize
\#AHs: Number of attention heads.
}
\vspace{-0.4cm}
\end{center}
\end{table}
\begin{table}[!t]
\begin{center}
\caption{Overview of the Schemes Considered for Simulation.}
\label{schemes}
\begin{tabular}{| c | c | c | c | c | c | c |}
\hline
Schemes & RAA & FAA & Dip & Iso & Opt & DL \\
\hline
\textbf{RAA-Dip-Opt} & $\checkmark$ & & $\checkmark$ & &$\checkmark$ & \\
\hline
\textbf{RAA-Dip-DL} & $\checkmark$ & & $\checkmark$ & & &$\checkmark$ \\
\hline
RAA-Iso-Opt & $\checkmark$ & &  &$\checkmark$ &$\checkmark$ & \\
\hline
RAA-Iso-DL & $\checkmark$ & &  &$\checkmark$ & & $\checkmark$\\
\hline
FAA-Dip-Opt & &$\checkmark$ & $\checkmark$ & &$\checkmark$ & \\
\hline
FAA-Dip-DL & &$\checkmark$ & $\checkmark$ & & & $\checkmark$\\
\hline
FAA-Iso-Opt & &$\checkmark$ &  & $\checkmark$ &$\checkmark$ & \\
\hline
FAA-Iso-DL & &$\checkmark$ &  & $\checkmark$ & & $\checkmark$\\
\hline
\end{tabular}
\\[6pt]
\parbox{0.9\linewidth}{
\footnotesize
Dip: Dipole antenna. \ \ Iso: Isotropic antenna. \\
Opt: Optimization framework. \ \ DL: DL framework.
}
\vspace{-0.4cm}
\end{center}
\end{table}
\begin{table}[!t]
\begin{center}
\caption{Average Computation Time in Seconds of RAA-Dip-Opt and RAA-Dip-DL for Different Number of Antennas .}
\label{TIME}
\begin{tabular}{| c | c | c | c | c |}
\hline
$N$ & 8 & 16 & 24 & 32 \\
\hline
RAA-Dip-Opt & 4.3946 & 5.3123 & 6.4419 & 7.4846 \\ 
\hline
RAA-Dip-DL & 0.0032 & 0.0035 & 0.0035 & 0.0040\\ 
\hline
\end{tabular}
\vspace{-0.3cm}
\end{center}
\end{table}

For performance comparison, we consider both RAAs and FAAs composed of either dipoles or isotropic antennas for UAV-aided communication using the proposed optimization and DL frameworks\footnote{When using the FAAs, the orientation optimization in Algorithm 2 is omitted, and in the DL framework, the orientation module is removed and we only train the beamforming module without the pre-training stage.}, yielding eight schemes as outlined in Table~\ref{schemes}. RAA-Dip-Opt and RAA-Dip-DL are the proposed schemes as highlighted in bold, while the remaining schemes serve as baselines. For all FAAs, $\boldsymbol{r}\!_{\text{a}}$ is fixed to $[1,0,0]^\text{T}$, and $\boldsymbol{r}\!_{\text{d}}$ is further set to $[0,1,0]^\text{T}$ for the FAA composed of dipoles.
\vspace{-0.8cm}
\subsection{Convergence Analysis and Computation Time}
\vspace{-0.1cm}
\label{CAnalysis}
\begin{figure}[t]
\centering
\subfigure[]{\label{OValue}\includegraphics[width=0.49\columnwidth]{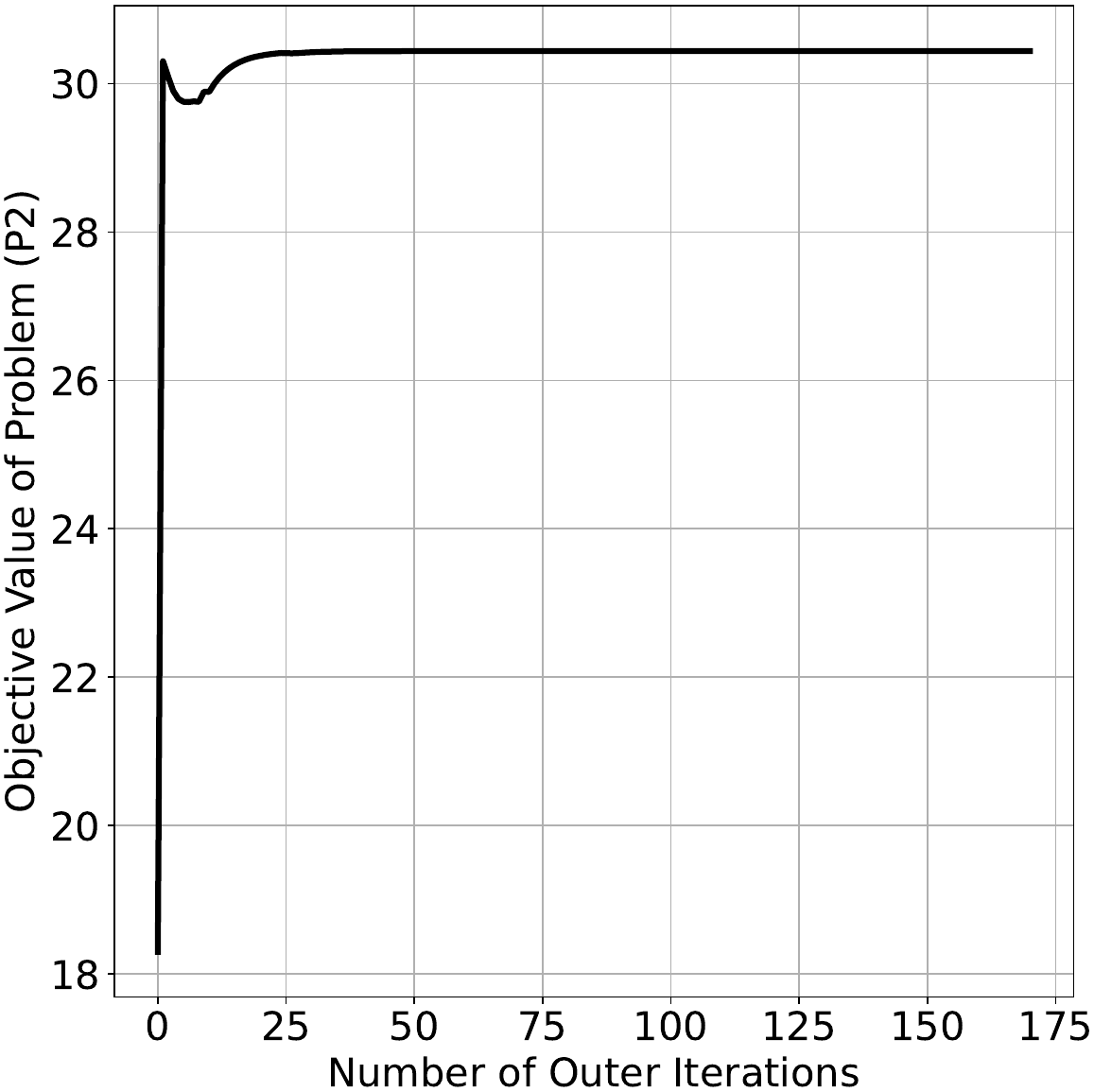}}
\subfigure[]{\label{CViolation}\includegraphics[width=0.49\columnwidth]{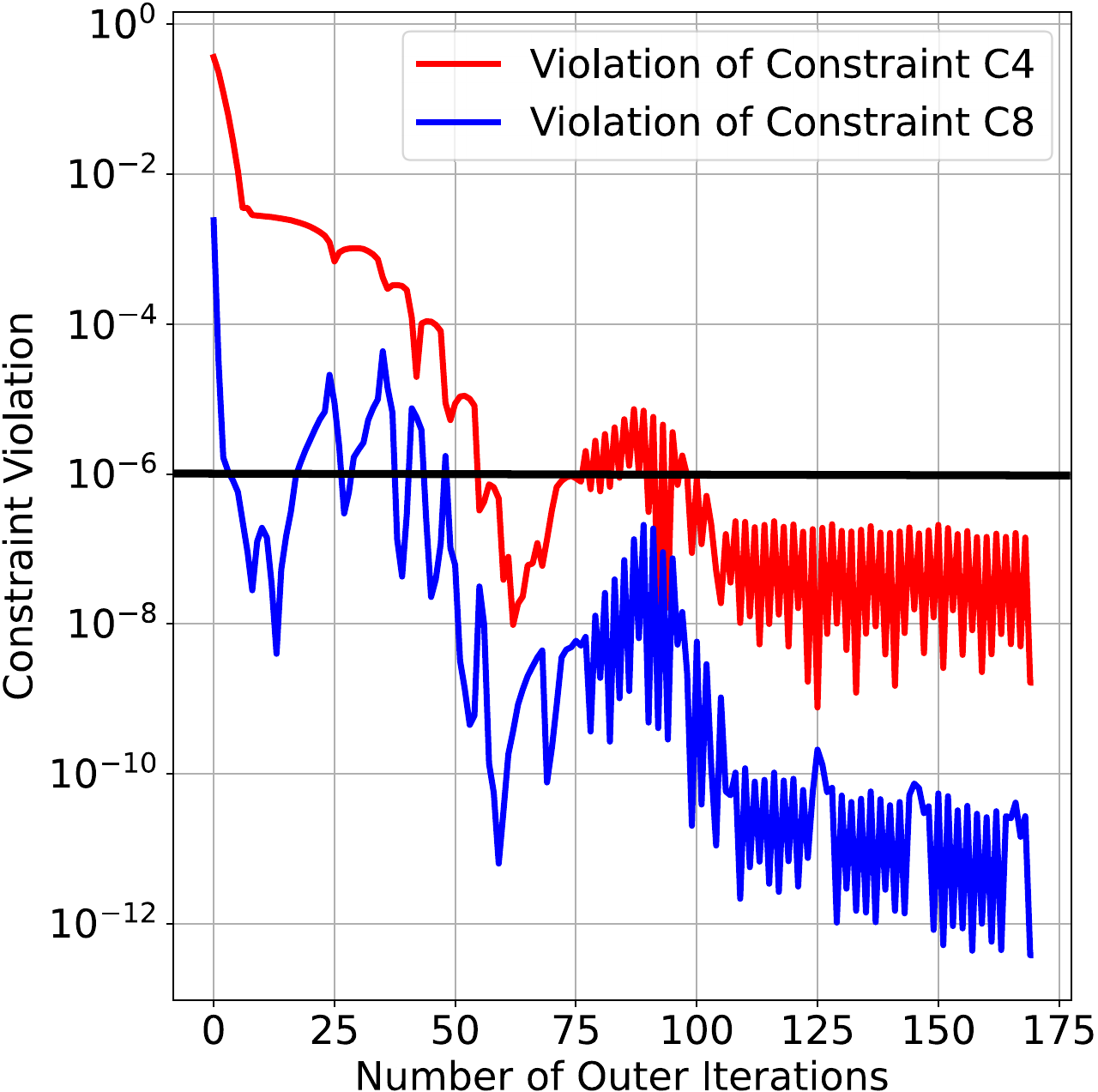}}
\vspace{-0.4cm}
\caption{(a) Objective value and (b) constraint violation of problem \eqref{P2} versus the number of outer iterations in Algorithm 2.
}
\vspace{-0.3cm}
\label{Convergence}
\end{figure}
\begin{figure}[t]
\centering
\subfigure[]{\label{PreTraining}\includegraphics[width=0.49\columnwidth]{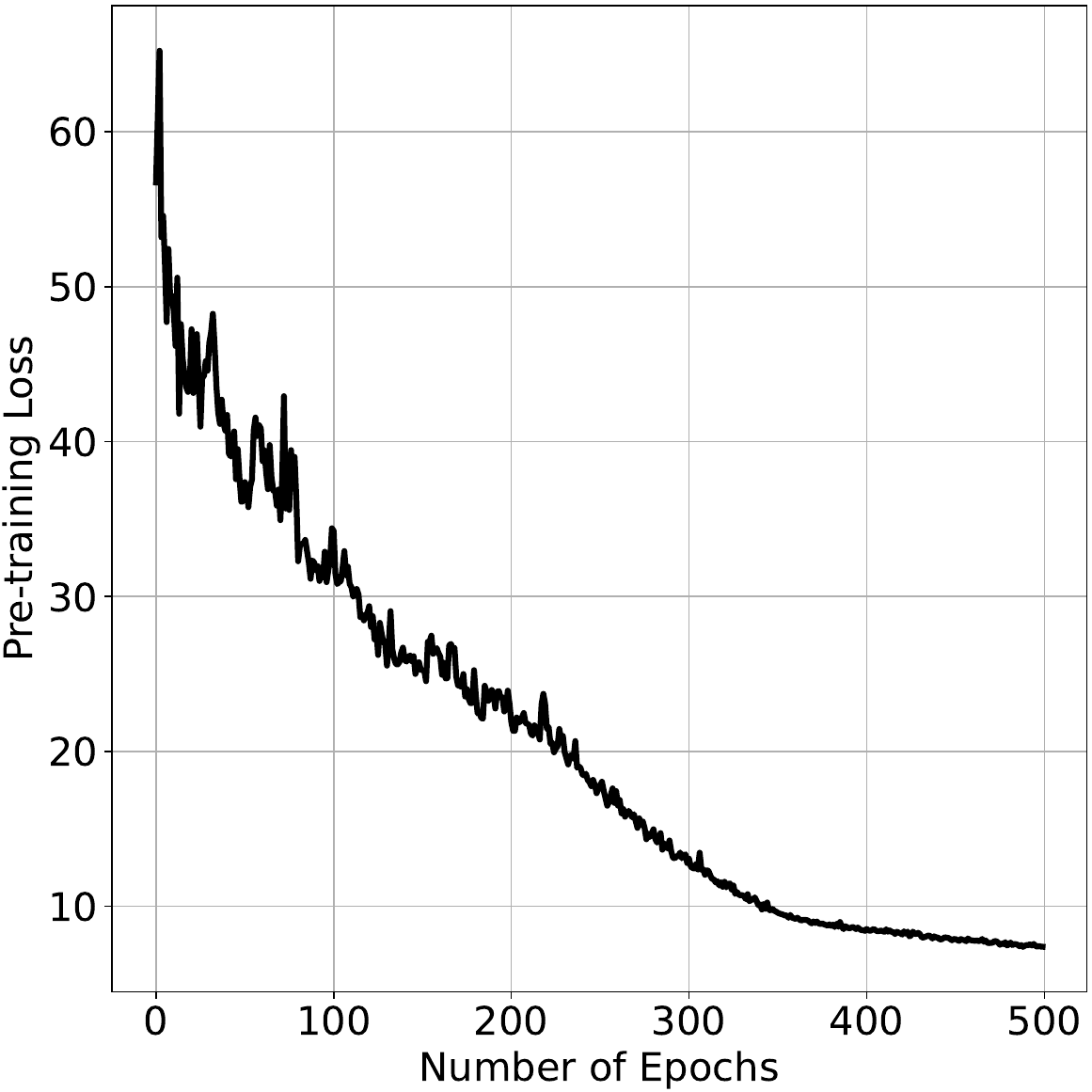}}
\subfigure[]{\label{JointTraining}\includegraphics[width=0.49\columnwidth]{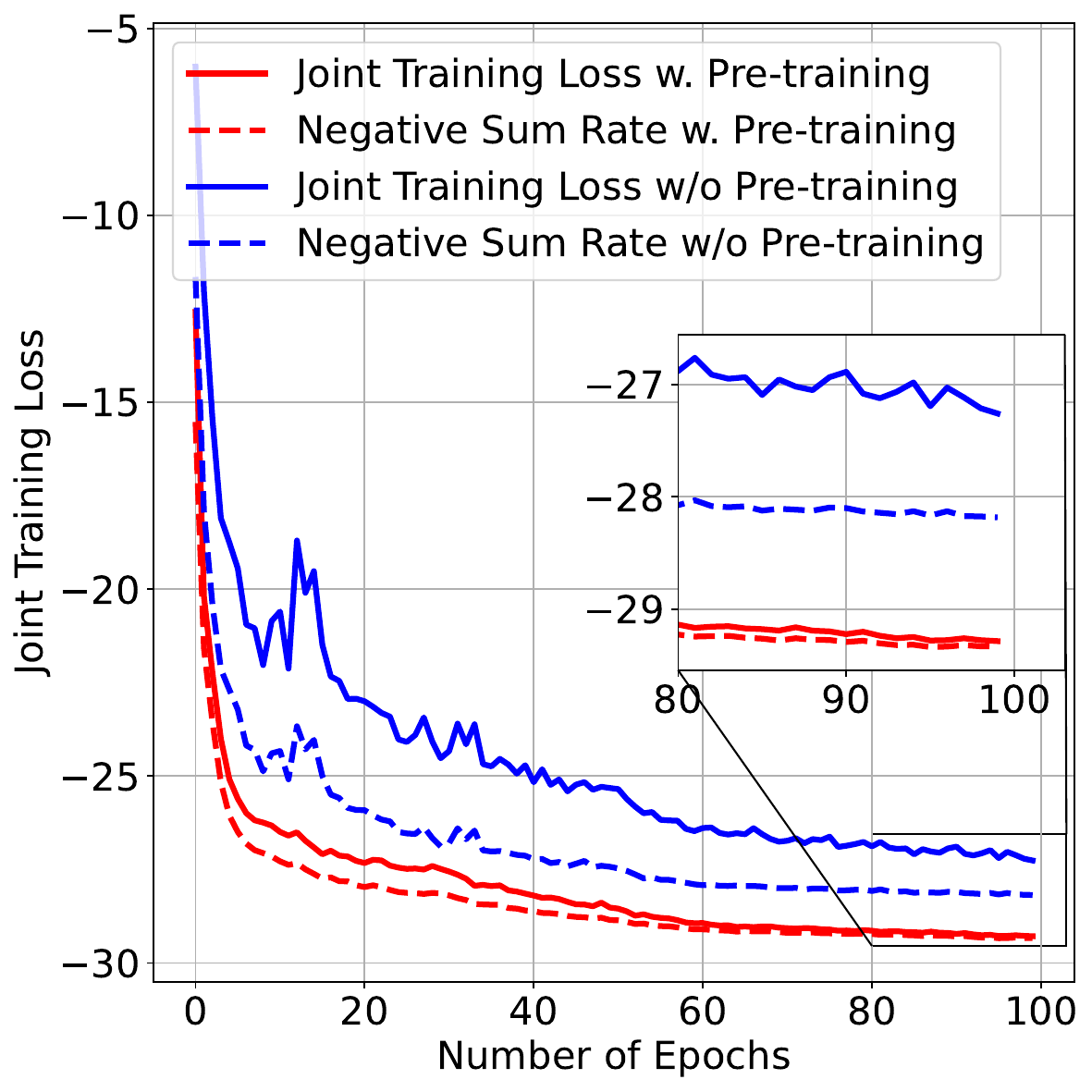}}
\vspace{-0.4cm}
\caption{(a) Pre-training loss and (b) joint training loss of Algorithm 3.
}
\label{Loss}
\end{figure}
\begin{figure}[t]
\centering
\subfigure[]{\label{FAAGain}\includegraphics[width=0.49\columnwidth]{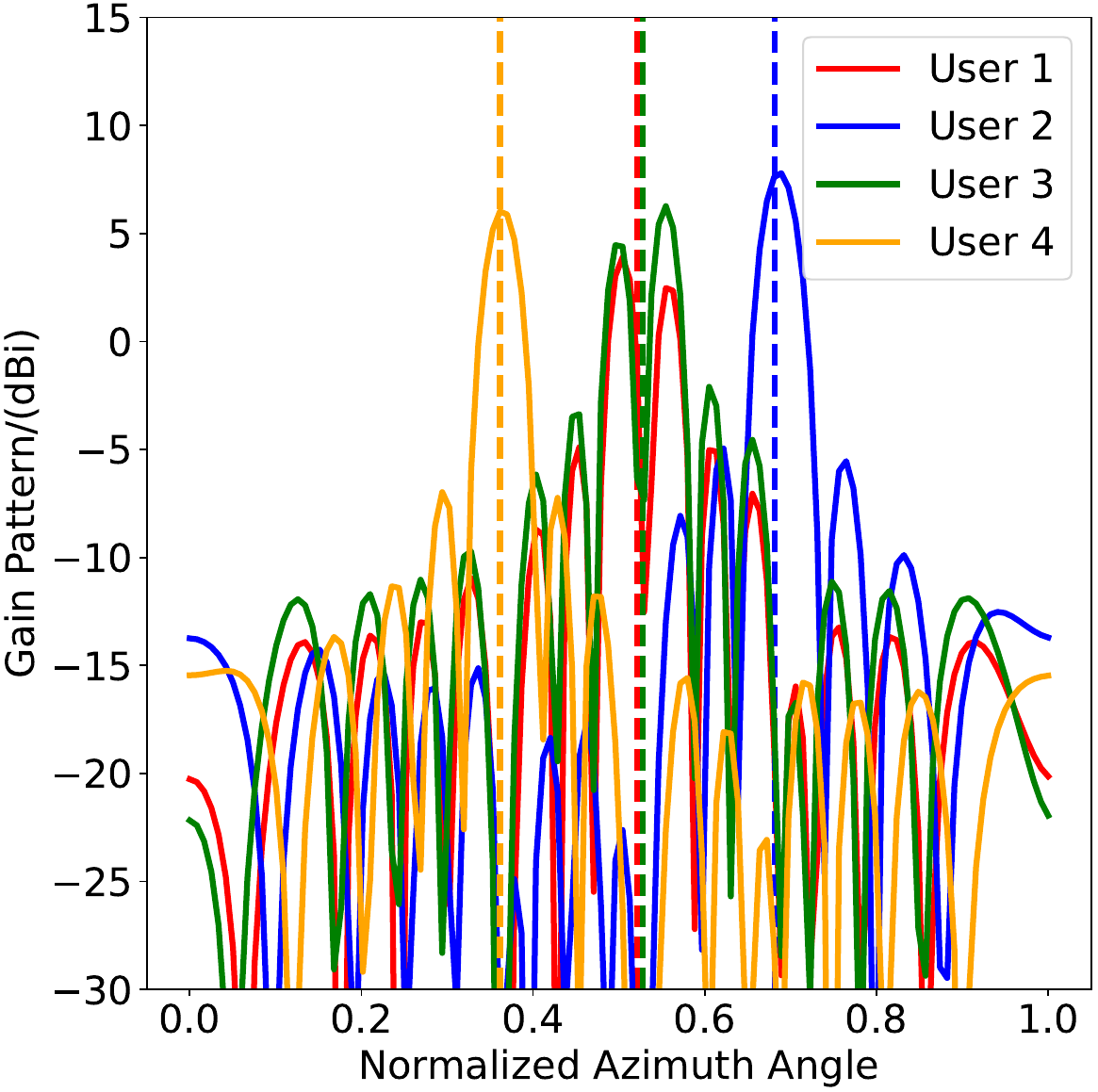}}
\subfigure[]{\label{RAAGain}\includegraphics[width=0.49\columnwidth]{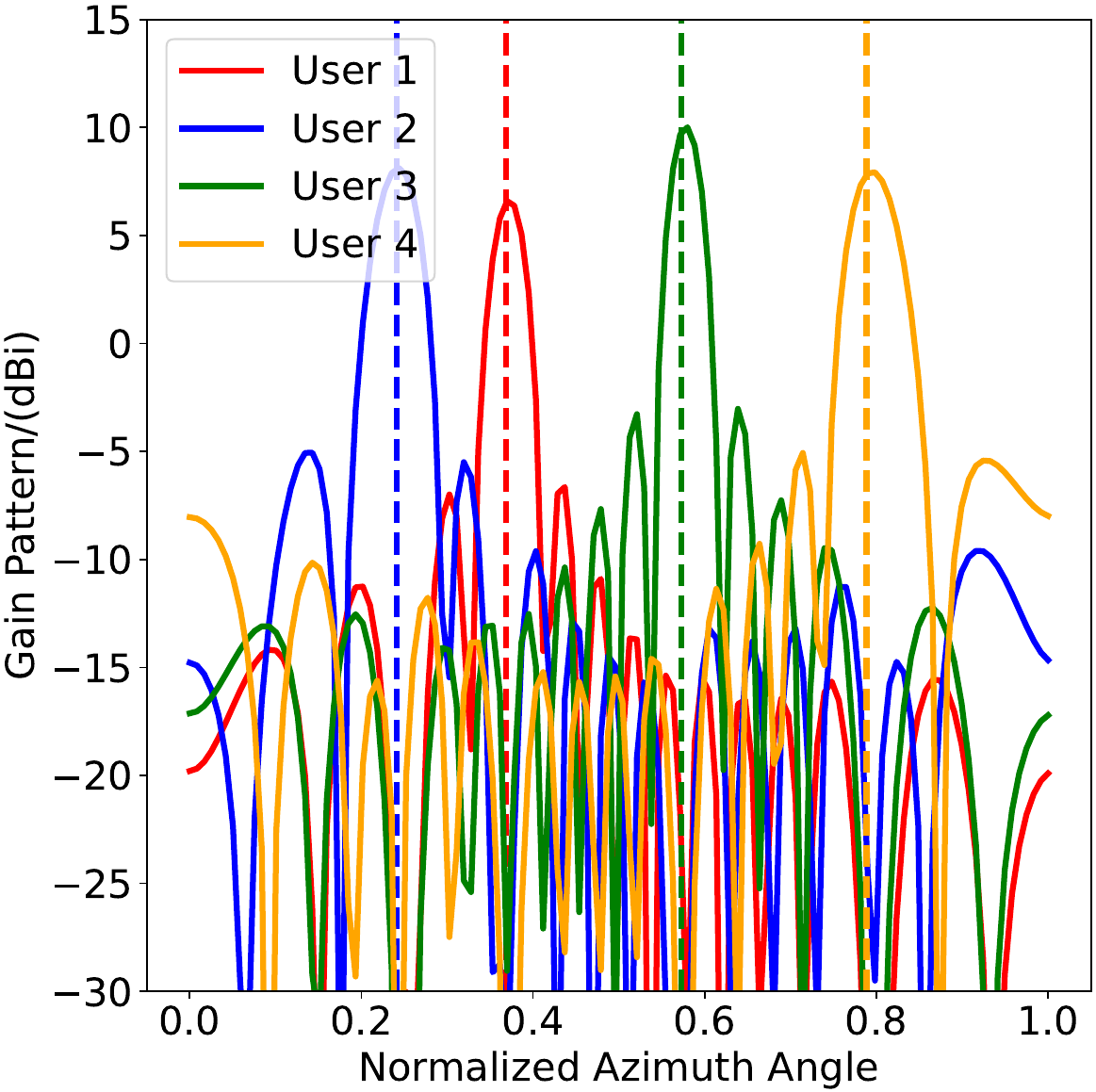}}
\vspace{-0.5cm}
\caption{Transmit beampattern gains of schemes (a) FAA-Dip-Opt and (b) RAA-Dip-Opt.
}
\label{Gain}
\end{figure}
To verify the convergence of Algorithm 2 for the proposed RAA-Dip-Opt, we consider the following illustrative user position setting: $\boldsymbol{u}_1\!=\![-3.73,-39.20,0]^\text{T}$ m, $\boldsymbol{u}_2\!=\![-35.56,-38.56,0]^\text{T}$ m, $\boldsymbol{u}_3\!=\![-3.78,16.69,0]^\text{T}$ m, and $\boldsymbol{u}_4\!=\![28.31,45.97,0]^\text{T}$ m. Fig.~\ref{Convergence} shows the objective $\mathcal{L}_\text{PDD}$ of problem \eqref{P2} and the violations of constraints C4 (i.e., $|\boldsymbol{r}\!_{\text{a}}^T\boldsymbol{r}\!_{\text{d}}|^2$) and C8 (i.e., $\| \boldsymbol{X}-\boldsymbol{H}^H\boldsymbol{W} \|_{\text{F}}^2$) versus the number of outer iterations. From Fig.~\ref{OValue} we observe that, as the iterations continue, $\mathcal{L}_\text{PDD}$ increases rapidly, followed by a slight decrease caused by the reduction of $\rho_1$ and $\rho_2$, when stronger penalties are imposed for C4 and C8. Subsequently, $\mathcal{L}_\text{PDD}$ resumes increasing and converges quickly. From Fig.~\ref{CViolation}, we find that the violation of C4 exceeds that of C8, indicating that optimizing $\{\boldsymbol{r}\!_{\text{a}},\boldsymbol{r}\!_{\text{d}}\}$ is more challenging than optimizing $\boldsymbol{W}$, because $\boldsymbol{r}\!_{\text{a}}$ and $\boldsymbol{r}\!_{\text{d}}$ lie on highly nonconvex sphere manifolds. However, with the proposed RCG method, the violation of C4 become smaller than $10^{-6}$ after approximately 100 iterations.

Fig.~\ref{Loss} shows the pre-training loss $\mathcal{L}_\text{pre-training}$, the joint training loss $\mathcal{L}_\text{training}$ (in solid lines), and the negative sum-rate given by $\mathcal{L}_\text{training}-\sum\nolimits_{k=1}^{K}\mu_k\text{ReLU}(\overline{\gamma}_k-\gamma_k)$ (in dashed lines) of Algorithm 3 for the proposed RAA-Dip-DL. To validate the advantages of the two-stage training strategy, we conduct an ablation study where only the joint training of the OM and BM modules is performed without pre-training. Fig.~\ref{PreTraining} shows that $\mathcal{L}_{\text{pre-training}}$, which represents the channel correlations among the users, decreases dramatically within 500 pre-training epochs. Meanwhile, from Fig.~\ref{JointTraining} we observe that the proposed two-stage training strategy achieves a much lower joint training loss $\mathcal{L}_\text{training}$ than its counterpart without pre-training, resulting in also a higher sum-rate. Moreover, pre-training significantly reduces the gap between the joint training loss and negative sum-rate, indicating that the proposed two-stage training strategy results in a much smaller violation of constraint C6 and is able to better satisfy the QoS of each user. This is because directly training the entire framework suffers severely from the gradient imbalance issue.

Table~\ref{TIME} further compares the average online computation time of the proposed RAA-Dip-Opt and RAA-Dip-DL for different numbers of transmit antennas $N$. As expected, the computation time increases with $N$, because the complexity depends on $N$. We observe that the DL framework entails a much lower computation time than the optimization framework, demonstrating its great potential to support real-time RAA-based UAV-aided communications.
\vspace{-0.6cm}
\subsection{Performance Comparison}
\vspace{-0.1cm}
To demonstrate the advantages of RAAs compared to FAAs in UAV-aided communication, Fig.~\ref{Gain} presents the transmit beampattern gains $|\boldsymbol{w}_{k}^{H}\boldsymbol{a}_{k}|^{2}$, for user $k=1,2,3,4$, achieved by FAA-Dip-Opt and the proposed RAA-Dip-Opt for the same illustrative user position setting as considered in Section~\ref{CAnalysis}. The dashed lines denote the normalized azimuth angles $\varphi_k/\pi$ of the users w.r.t. the array. From Fig.~\ref{FAAGain} we observe that user 1 and 3 suffer from severe interference for FAA-Dip-Opt due to their strong channel correlation, resulting in a sum-rate of 24.29 nat/s/Hz. However, through orientation design, the proposed RAA-Dip-Opt is able to separate users in azimuth angles (cf. Fig.~\ref{RAAGain}) and mitigate the interference. Moreover, the proposed RAA-Dip-Opt attains larger transmit beampattern gains than FAA-Dip-Opt, because it can align the users with the high-gain regions of dipole antennas, resulting in a higher sum-rate of 30.43 nat/s/Hz.
\begin{figure}[t]
\centering
\subfigure[]{\label{VSQoS}\includegraphics[width=0.49\columnwidth]{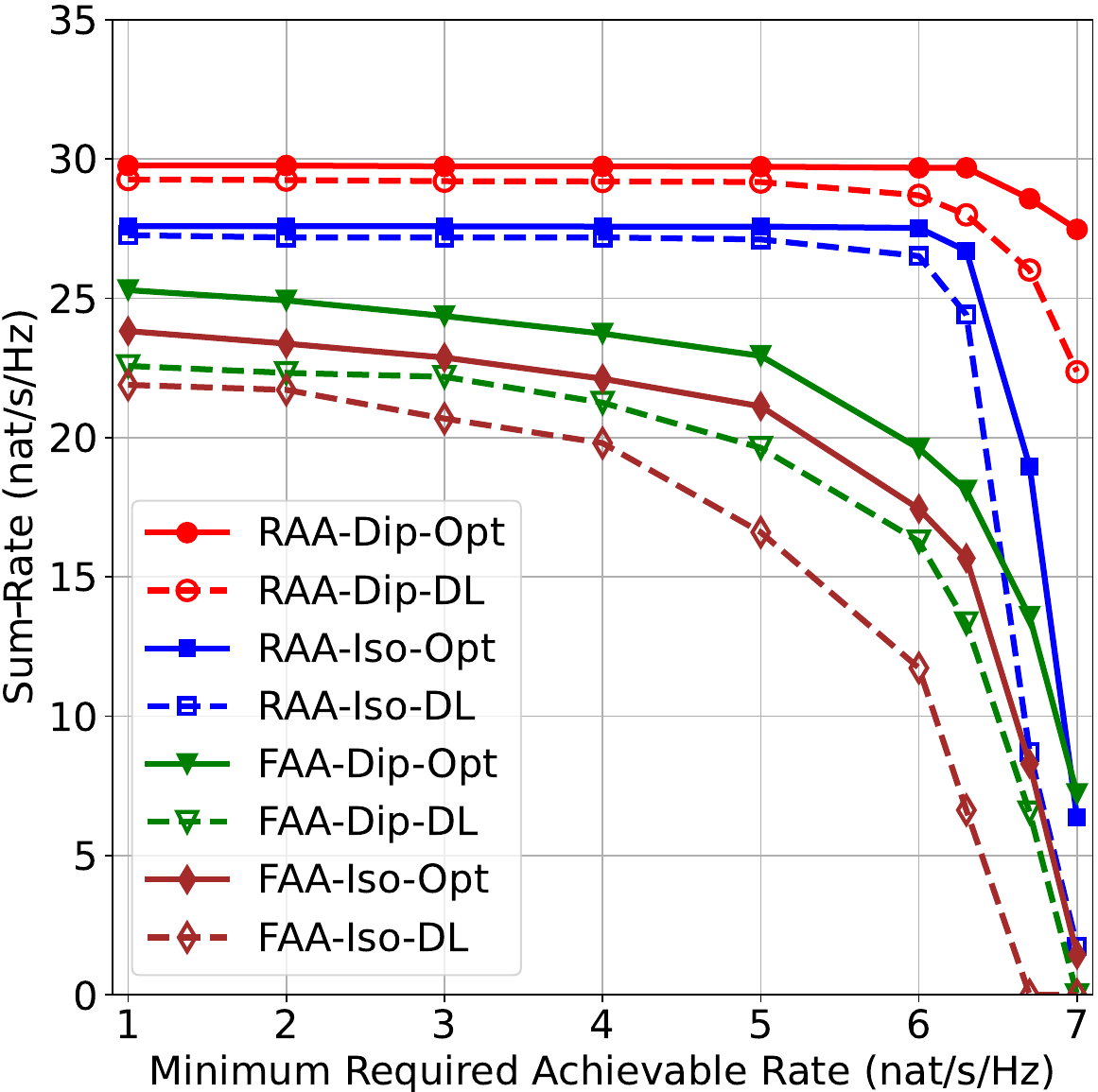}}
\subfigure[]{\label{FVSQoS}\includegraphics[width=0.49\columnwidth]{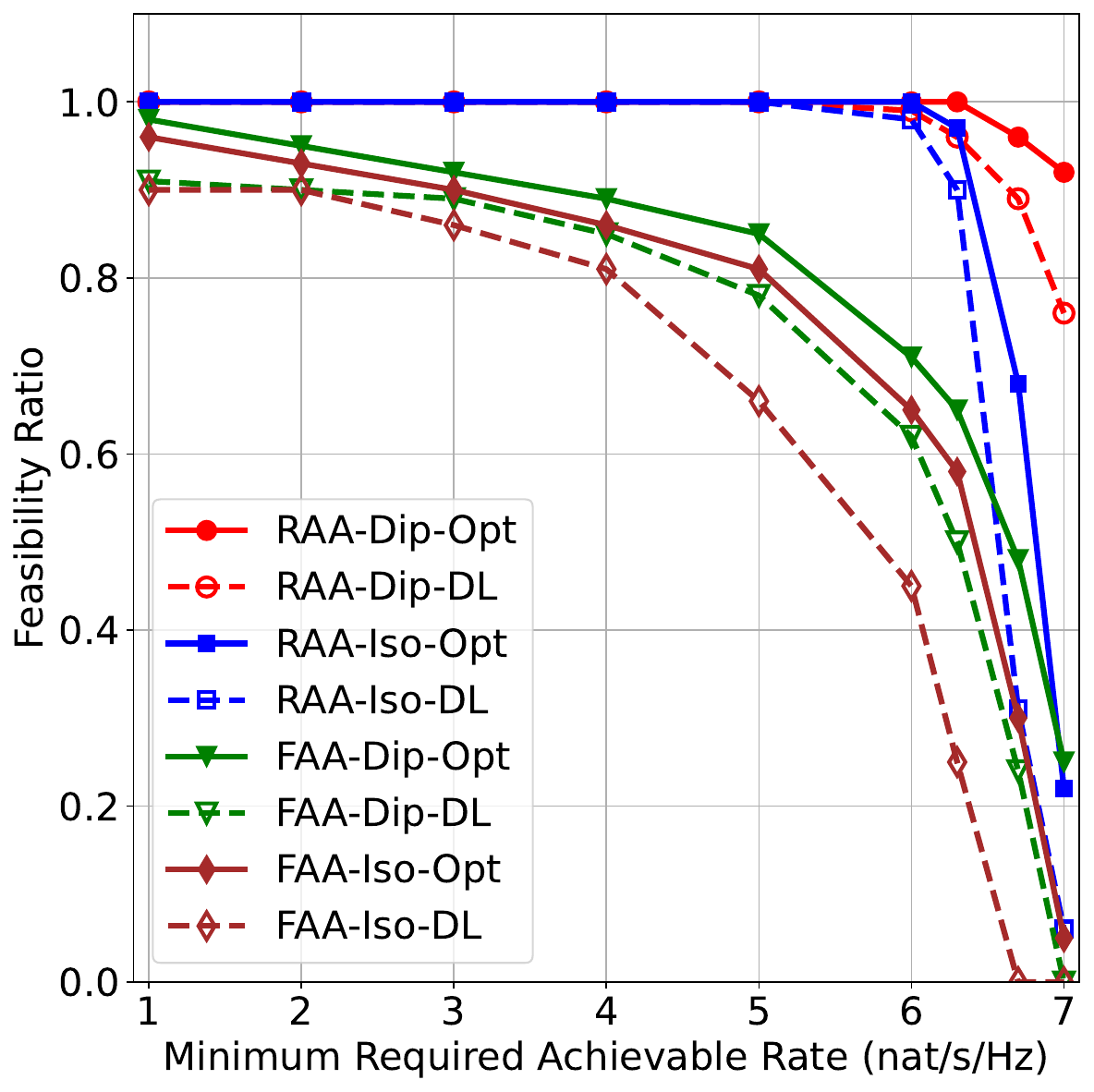}}
\vspace{-0.5cm}
\caption{(a) Sum-rate and (b) feasibility ratio versus QoS requirements.
}
\vspace{-0.1cm}
\label{increase_QoS}
\end{figure}
\begin{figure}[t]
\centering
\subfigure[]{\label{VSantenna}\includegraphics[width=0.49\columnwidth]{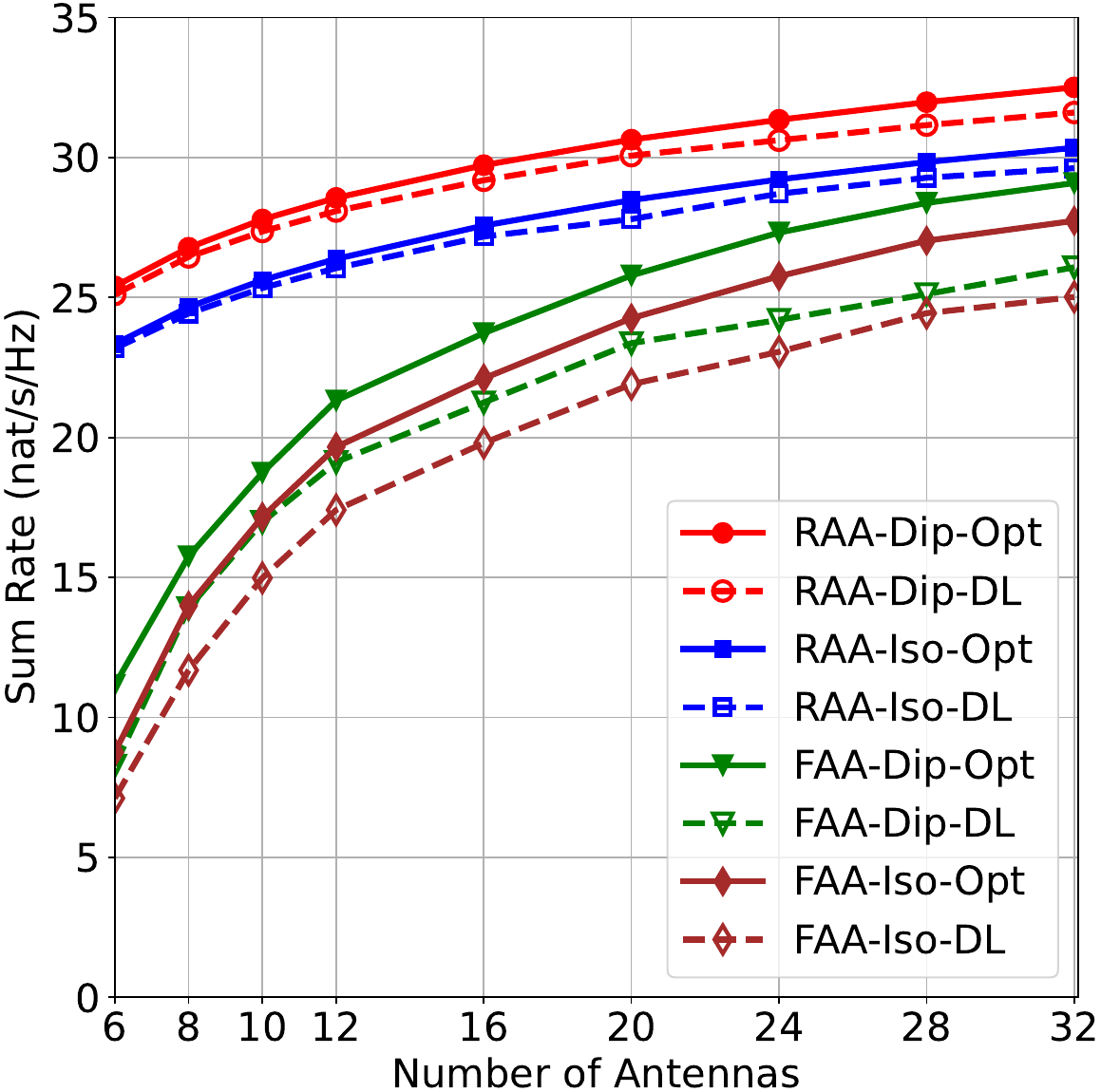}}
\subfigure[]{\label{FVSantenna}\includegraphics[width=0.49\columnwidth]{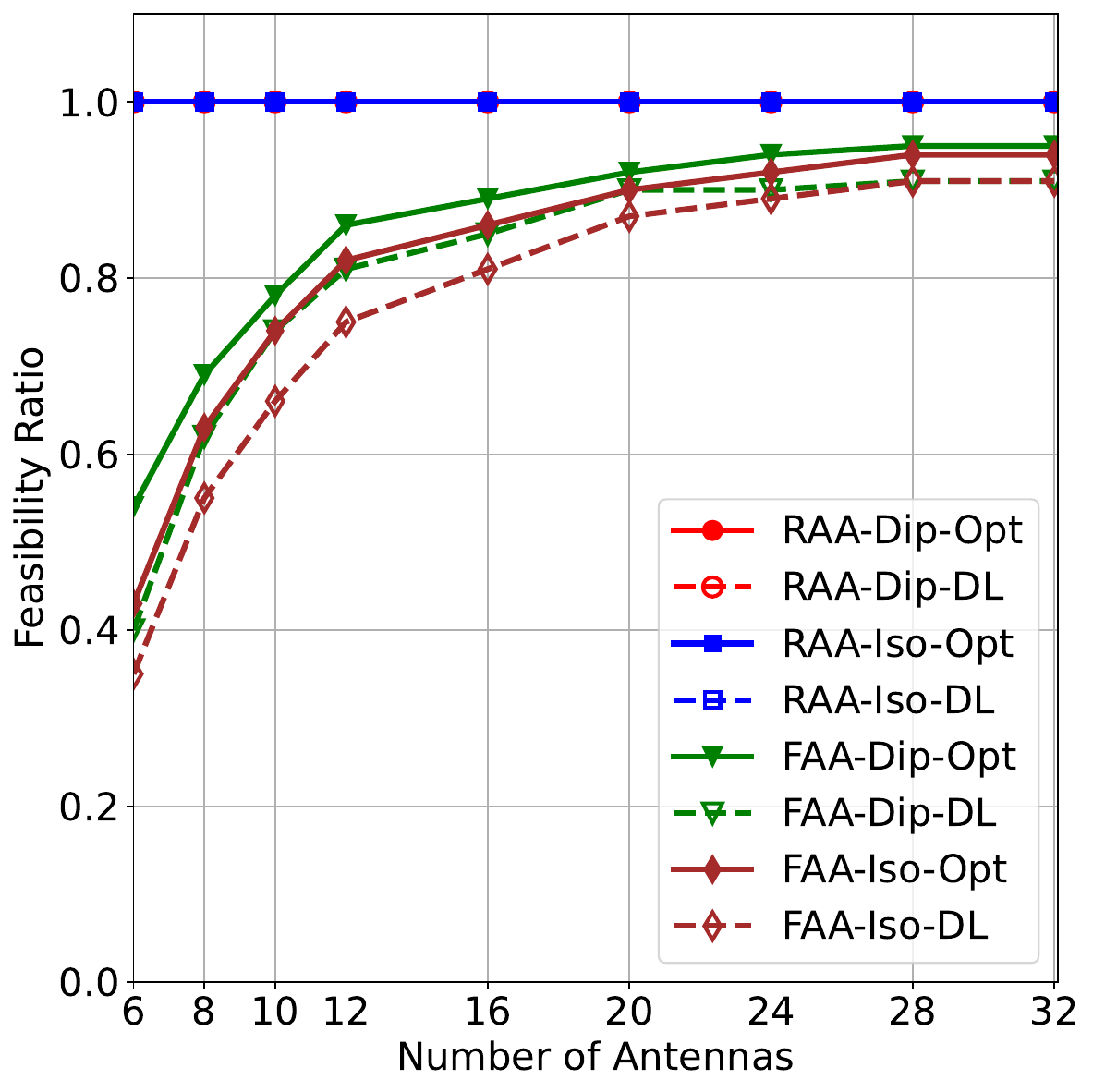}}
\vspace{-0.5cm}
\caption{(a) Sum-rate and (b) feasibility ratio versus the number of antennas.
}
\vspace{-0.2cm}
\label{increase_antenna}
\end{figure}

Next, we compare the sum-rates achieved by all considered schemes, averaged over 100 random realizations of the user locations. \emph{Note that the sum-rate obtained by a given scheme for a given realization is set to zero whenever the obtained solution fails to satisfy the QoS constraint in problem \eqref{P1}} to clearly illustrate the advantages of the proposed RAA-Dip-Opt and RAA-Dip-DL schemes. Meanwhile, we define the feasibility ratio of a given scheme as the percentage of feasible solutions attained over 100 random realizations to evaluate the QoS constraint satisfaction. Fig.~\ref{increase_QoS} depicts the average sum-rates and feasibility ratios for all considered optimization (in solid lines) and DL (in dashed lines) schemes versus the minimum required achievable rate $\overline{R}_k$ of each user. We observe that the average sum-rates and feasibility ratios for all schemes decrease monotonically with $\overline{R}_k$. This is expected because a higher $\overline{R}_k$ shrinks the feasible region of \eqref{P1} and reduces the flexibility for sum-rate maximization or even renders \eqref{P1} infeasible. However, the schemes employing RAAs significantly outperform the schemes based on FAAs, particularly when $\overline{R}_k$ exceeds 6 nat/s/Hz. For example, the average sum-rate of the proposed RAA-Dip-Opt remains at approximately 30 nat/s/Hz when $\overline{R}_k$ is below 6 nat/s/Hz and then decreases only slightly, whereas the average sum-rate of FAA-Dip-Opt decreases continuously and significantly from about 25 nat/s/Hz to 7 nat/s/Hz. This is because RAAs can proactively adjust the users' channels via orientation design, thereby enhancing the received signal power of the intended users, mitigating multiuser interference, and enlarging the feasible region of problem \eqref{P1} (cf. Fig.~\ref{FVSQoS}).

Fig.~\ref{increase_QoS} also reveals that the proposed optimization framework outperforms the proposed DL framework in terms of the sum-rate, especially when $\overline{R}_k$ is large. This is due to the fact that the optimization framework based on the PDD method treats each realization of the user positions individually and optimizes $\{\boldsymbol{W}$, $\boldsymbol{r}\!_{\text{a}}$, $\boldsymbol{r}\!_{\text{d}}\}$ iteratively with closed-form solutions for the involved subproblems, achieving high reliability even under stringent QoS requirements. In contrast, the DL framework learns the joint design rules via training over a batch of realizations. As a result, the DL framework maximizes the sum-rate and improves the QoS satisfaction on average over the batch, making it difficult to precisely satisfy stringent QoS requirements for individual realizations. However, with the two-module architecture and the two-stage training strategy, the DL framework approaches the sum-rate performance of the optimization framework when $\overline{R}_k$ is relatively small. For example, the average sum-rate of RAA-Dip-DL is only about $2\%$ lower than that of RAA-Dip-Opt when $\overline{R}_k$ lies between 1 nat/s/Hz and 5 nat/s/Hz. For stringent QoS requirements, RAA-Dip-DL still suffers less than $5\%$ sum-rate loss compared with RAA-Dip-Opt for the feasible solutions. The performance gap between the average sum-rates of RAA-Dip-DL and RAA-Dip-Opt for $\overline{R}_k>5$ nat/s/Hz is mainly due to the relatively lower feasibility ratio of the DL framework. Note that both RAA-Dip-Opt and RAA-Dip-DL yield higher average sum-rates than the baselines RAA-Iso-Opt and RAA-Iso-DL, particularly when $\overline{R}_k$ exceeds 6 nat/s/Hz, owing to the high directivity of dipole antennas.

Fig.~\ref{increase_antenna} presents the average sum-rates and feasibility ratios versus the number of transmit antennas $N$. We observe that the average sum-rates and feasibility ratios increase monotonically with $N$ for all considered schemes, as more spatial DoFs are exploitable for sum-rate maximization. Moreover, the schemes using RAAs outperform those using FAAs, particularly for a small $N$. For example, when $N=6$, the proposed RAA-Dip-Opt achieves an average sum-rate that is almost twice as high as that of FAA-Dip-Opt (cf. Fig.~\ref{VSantenna}). This is because FAAs with few antennas have limited capability to distinguish users in azimuth angles, making it difficult to mitigate multiuser interference and satisfy QoS constraints, thereby resulting in a much lower feasibility ratio (cf. Fig.~\ref{FVSantenna}) and average sum-rate. However, all schemes based on RAAs attain a feasibility ratio close to $100\%$ for $N$ ranging from 6 to 32, highlighting the potential of RAAs in systems requiring a small number of antennas, such as UAV platforms under SWAP constraints. Additionally, the high efficiency of the DL framework is evident in Fig.~\ref{VSantenna}, as the performance loss is less than $3\%$ when comparing RAA-Dip-DL with RAA-Dip-Opt.

\begin{figure}[t]
\centering
\subfigure[]{\label{VSuser}\includegraphics[width=0.49\columnwidth]{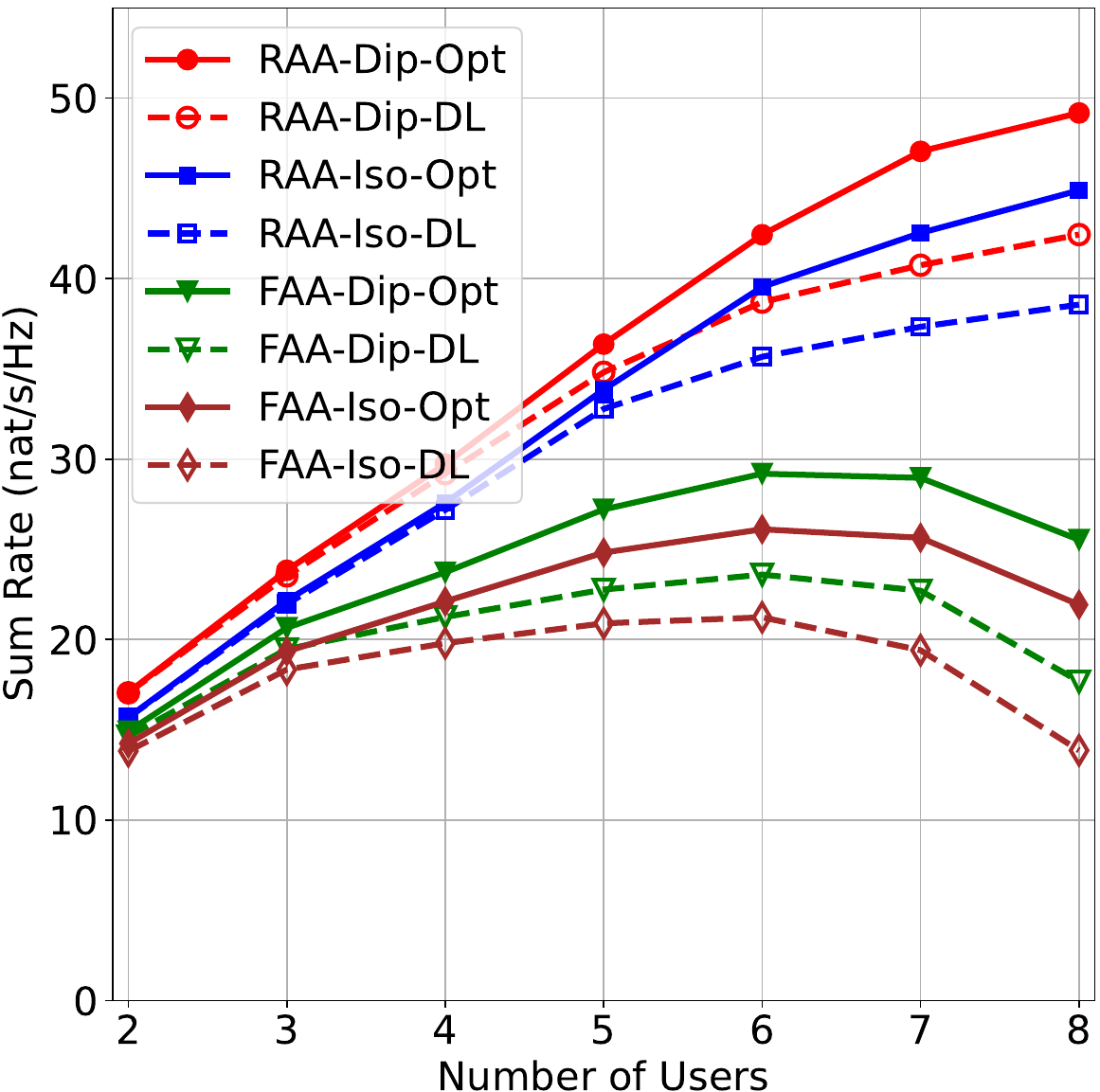}}
\subfigure[]{\label{FVSuser}\includegraphics[width=0.49\columnwidth]{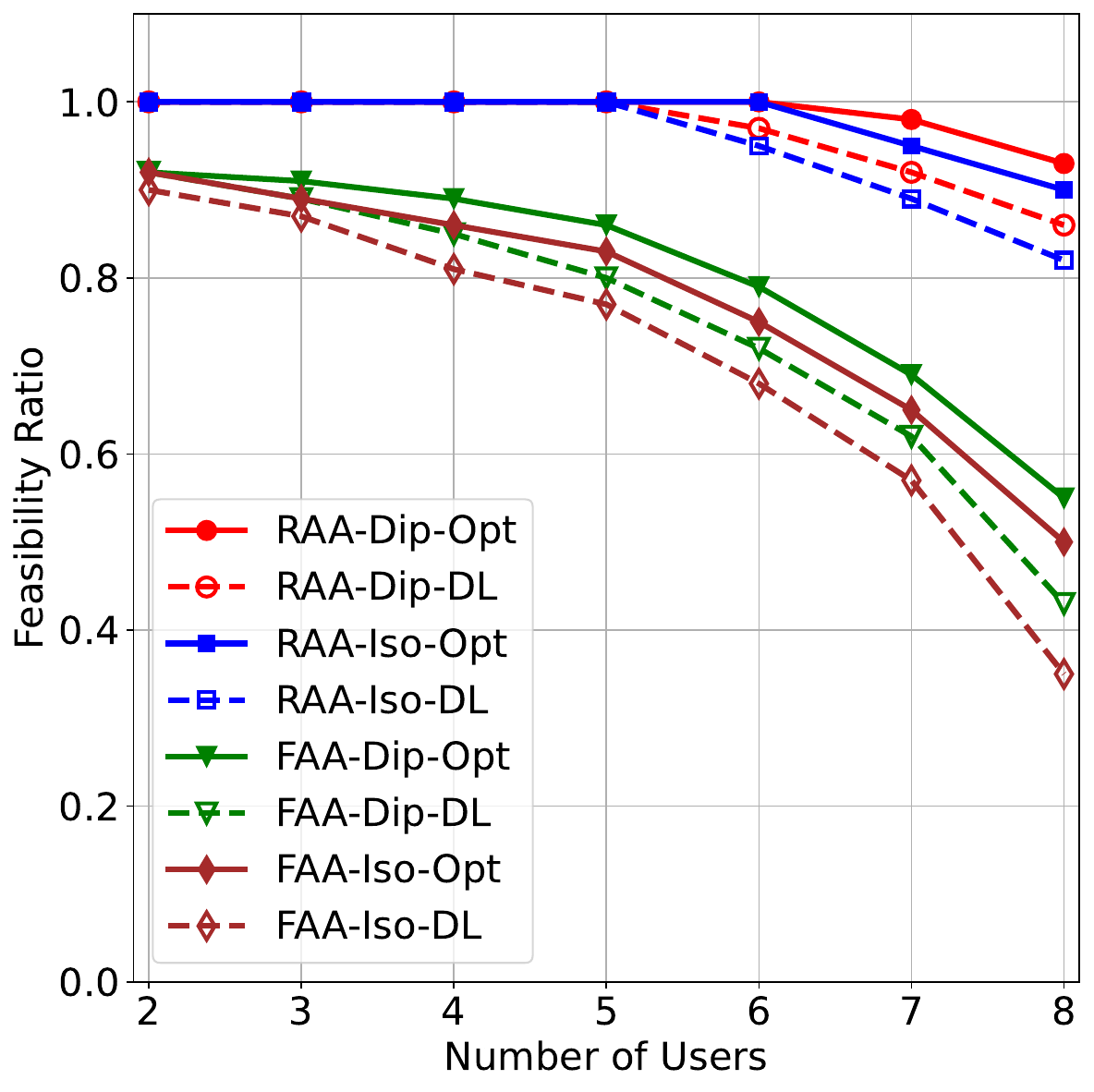}}
\vspace{-0.5cm}
\caption{(a) Sum-rate and (b) feasibility ratio versus the number of users.
}
\vspace{-0.3cm}
\label{increase_user}
\end{figure}
\begin{figure}[t]
\centering \includegraphics[width=0.99\columnwidth]{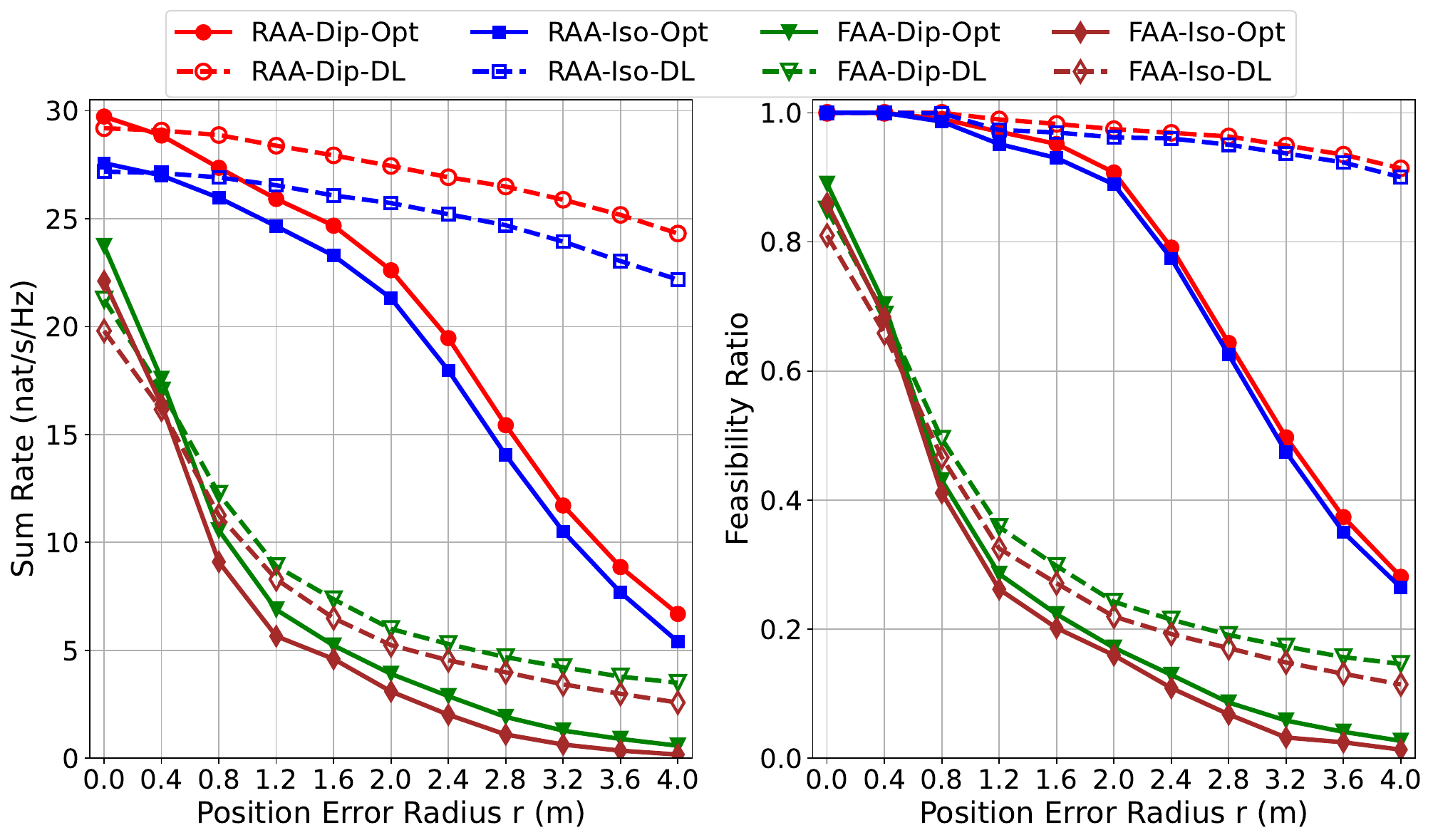} 
\vspace{-0.5cm}
\caption{Sum-rate and feasibility ratio versus radius $r$ (m) of position error.}
\vspace{-0.1cm}
\label{error}
\end{figure}
Fig.~\ref{increase_user} further compares the average sum-rates and feasibility ratios versus the number of users $K$. From Fig.~\ref{VSuser}, we observe that the average sum-rates for all schemes increase with $K$ for $K\leq6$. This is because, when $K$ is small, the QoS requirements can be satisfied for both FAAs and RAAs using only part of the available resources. The remaining resources are allocated to users with favorable channels, thereby exploiting the multiuser diversity gain for higher sum-rates as $K$ increases. However, for $K>6$, the average sum-rates of the FAA-based schemes decrease, whereas those of the RAA-based schemes continue to increase. This is because, when more users are present in a given area, multiuser interference becomes severe for FAAs. As a result, FAA-based schemes will spend almost all resources satisfying the stringent QoS requirements, and in severe cases even fail to meet them, leaving no resources for exploiting the multiuser diversity gain. By contrast, RAAs can still effectively mitigate multiuser interference for large $K$ by orientation design, and retain sufficient resources to exploit the multiuser diversity gain, thereby achieving much higher average sum-rates and feasibility ratios (cf. Fig.~\ref{FVSuser}). This highlights the advantage of RAAs in simultaneously satisfying QoS constraints and exploiting multiuser diversity.
\vspace{-0.4cm}
\subsection{Robustness Against Imperfect User Positions}
Finally, we evaluate the robustness of all considered schemes under imperfect user position information. We adopt a practical uniformly bounded error model \cite{r11,r35}, where the estimated user positions are uniformly distributed within a disk of radius $r$ (m) centered at the actual user location. In the simulation, the actual user locations are only known for performance evaluation, but are unknown to any scheme during the joint design. As such, all schemes operate directly based on the estimated positions with errors. Fig.~\ref{error} depicts the average sum-rates and feasibility ratios versus radius $r$. We observe that, as expected, the average sum-rates and feasibility ratios decrease with $r$, reflecting the importance of accurate user position information for high performance UAV-aided communication. However, for the RAA-based schemes, the average sum-rates degrade much slower with $r$ than their FAA-based counterparts. This is because RAAs can separate users in the azimuth angles, such that the resulting sum-rates are less sensitive to small position perturbations. In contrast, as can be observed from Fig.~\ref{FAAGain}, when employing FAA-Dip-Opt, slight position perturbations of user 1 and user 3 can dramatically reduce the sum-rate or even violate the QoS constraints. Interestingly, Fig.~\ref{error} demonstrates that the proposed DL framework is much more robust to imperfect user position information than the proposed optimization framework. This is because the DL framework learns the input-output mapping from the user positions to the predicted $\{\boldsymbol{W}$, $\boldsymbol{r}\!_{\text{a}}$, $\boldsymbol{r}\!_{\text{d}}\}$ over a large number of training samples, whereas the optimization framework is specifically tailored to each realization and is therefore more sensitive to position errors. 
\vspace{-0.3cm}
\section{Conclusions}
In this paper, we investigated RAA-based UAV-aided communication. To maximize the sum-rate subject to per-user QoS requirements, we jointly optimized the RAA orientation and beamforming. To solve the formulated highly nonconvex problem, we first proposed an optimization framework based on the PDD method, which can efficiently handle the coupling between beamforming and orientation design while guaranteeing convergence to a KKT solution. We further proposed a GNN-based DL framework which comprises two serially connected modules and a two-stage training strategy to overcome the gradient imbalance for joint beamforming and orientation design during training. Simulation results demonstrated that RAAs significantly outperform FAAs in terms of mitigating interference and exploiting multiuser diversity, achieving significantly higher sum-rates under QoS constraints. Moreover, the proposed optimization framework was shown to be capable of satisfying stringent QoS requirements. In comparison, the proposed DL framework could significantly reduce the computation time while approaching the performance of the optimization framework, and exhibited great robustness against imperfect user position information.
\appendix[Proof of Theorem 2]
The Lagrangian associated with problem \eqref{P6} is defined as
\begin{align}
&\mathcal{L}(\boldsymbol{x}_k,\lambda_{2,k})=f(\boldsymbol{x}_k,\boldsymbol{x}_k^{(i)})\notag \\
&+\tfrac{1}{2\rho_1}\sum\nolimits_{m=1}^{K}|x_{k,m}-\boldsymbol{h}_k^{\mathrm{H}}\boldsymbol{w}_m+\rho_1\boldsymbol{Z}_{1(k,m)}|^2 \notag \\
&+\lambda_{2,k}\overline{\gamma}_k\Big(\sum\nolimits_{m=1, m \neq k}^{K}|x_{k,m}|^2+\sigma^{2}_k\Big)-\lambda_{2,k}|x_{k,k}|^2.
\end{align}
The optimal $\boldsymbol{x}_k^\star$ and $\lambda_{2,k}^\star$ have to satisfy the KKT conditions, including the stationarity conditions given by 
\begin{align}
&\left.\partial \mathcal{L}(\boldsymbol{x}_k,\lambda_{2,k})\Big/\partial x_{k,k}^*\right|_{x_{k,k}=x_{k,k}^{\star}}=(\delta(\boldsymbol{x}_k^{(i)})-\lambda_{2,k}^\star)x_{k,k}^\star \notag \\
&+\tfrac{1}{2\rho_1}(\rho_1\boldsymbol{Z}_{1(k,k)}-\boldsymbol{h}_k^\mathrm{H}\boldsymbol{w}_k)-x_{k,k}^{(i)}/\zeta_k(\boldsymbol{x}_k^{(i)})=0, 
\label{station1} \\
&\left.\partial \mathcal{L}(\boldsymbol{x}_k,\lambda_{2,k})\Big/\partial x_{k,m}^*\right|_{x_{k,m}=x_{k,m}^{\star}}=(\delta(\boldsymbol{x}_k^{(i)})+\lambda_{2,k}^\star\overline{\gamma}_k)x_{k,m}^\star \notag \\
&+\tfrac{1}{2\rho_1}(\rho_1\boldsymbol{Z}_{1(k,m)}-\boldsymbol{h}_k^\mathrm{H}\boldsymbol{w}_m)=0, \ m\neq k,
\label{station2}
\end{align}
the complementary slackness condition
\begin{align}
\lambda_{2,k}^\star\overline{\gamma}_k\big(\sum\nolimits_{m=1, m \neq k}^{K}|x_{k,m}^\star|^2+\sigma^{2}_k\big)-\lambda_{2,k}^\star|x_{k,k}^\star|^2=0,
\label{CSC}
\end{align}
the dual feasibility condition $\lambda_{2,k}^\star\geq 0$, and the primal feasibility condition
\begin{align}
\overline{\gamma}_k\Big(\sum\nolimits_{m=1, m \neq k}^{K}|x_{k,m}^\star|^2+\sigma^{2}_k\Big)-|x_{k,k}^\star|^2 \leq 0. 
\label{PFC}
\end{align}
Note that \eqref{xkm} directly follows from \eqref{station2}. 

Now, let us assume that $\frac{1}{2\rho_1}(\rho_1\boldsymbol{Z}_{1(k,k)}-\boldsymbol{h}_k^\mathrm{H}\boldsymbol{w}_k)-x_{k,k}^{(i)}/\zeta_k(\boldsymbol{x}_k^{(i)})=0$. As $|x_{k,k}^\star|^2>0$ in \eqref{PFC}, the optimal Lagrangian multiplier $\lambda_{2,k}^\star=\delta(\boldsymbol{x}_k^{(i)})$ and \eqref{xkk1} can be derived from the conditions \eqref{station1} and \eqref{CSC}, respectively. 

Subsequently, assume $\frac{1}{2\rho_1}(\rho_1\boldsymbol{Z}_{1(k,k)}\!-\!\boldsymbol{h}_k^\mathrm{H}\boldsymbol{w}_k)\!-\!x_{k,k}^{(i)}/\zeta_k(\boldsymbol{x}_k^{(i)})\!\neq\! 0$. The proof of \eqref{xkk2} follows from \eqref{station1}.

Finally, by substituting \eqref{xkm} and \eqref{xkk2} into \eqref{CSC}, we have $\lambda_{2,k}^\star\kappa(\lambda_{2,k}^\star)=0$, where
\begin{align}
&\kappa(\lambda_{2,k}^\star)=\overline{\gamma}_k\Big(\sum\nolimits_{m=1, m \neq k}^{K}\frac{|\boldsymbol{h}_k^\mathrm{H}\boldsymbol{w}_m-\rho_1\boldsymbol{Z}_{1(k,m)}|^2}{4\rho_1^2(\delta(\boldsymbol{x}_k^{(i)})+\lambda_{2,k}^\star\overline{\gamma}_k)^2}+\sigma^{2}_k\Big) \notag \\
&-\frac{|\frac{1}{2\rho_1}(\rho_1\boldsymbol{Z}_{1(k,k)}-\boldsymbol{h}_k^\mathrm{H}\boldsymbol{w}_k)-x_{k,k}^{(i)}/\zeta_k(\boldsymbol{x}_k^{(i)})|^2}{(\delta(\boldsymbol{x}_k^{(i)})-\lambda_{2,k}^\star)^2},
\end{align}
and $\kappa(\lambda_{2,k}^\star)$ monotonically decreases w.r.t. $\lambda_{2,k}^\star$. Since $\lambda_{2,k}^\star\geq 0$, this leads to $\lambda_{2,k}^\star=0$ when $\kappa(0)\leq 0$; otherwise, $\lambda_{2,k}^\star$ can be found by solving $\lambda_{2,k}^\star\kappa(\lambda_{2,k}^\star)=0$ via bisection search. This completes the proof.


\vspace{-0.4cm}
\begin{thebibliography}{1}
\bibliographystyle{IEEEtran}

\bibitem{r1} Z. Xiao et al., ``A Survey on Millimeter-Wave Beamforming
Enabled UAV Communications and Networking," \emph{IEEE Commun.
Surveys Tuts.}, vol. 24, no. 1, pp. 557-610, 2022.

\bibitem{r2} M. Mozaffari et al., ``A Tutorial on UAVs for Wireless
Networks: Applications, Challenges, and Open Problems," \emph{IEEE Commun.
Surveys Tuts.}, vol. 21, no. 3, pp. 2334-2360,
2019.

\bibitem{r3}
T. L. Marzetta, ``Noncooperative Cellular Wireless with Unlimited Numbers of Base Station Antennas,'' \emph{IEEE Trans. Wireless Commun.}, vol. 9, no. 11, pp. 3590-3600, Nov. 2010.

\bibitem{r4}
Q. Wu, et al., ``A Comprehensive Overview on 5G-and-beyond Networks with UAVs,'' \emph{IEEE J. Sel. Areas Commun.}, vol. 39, no. 10, pp. 2912-2945, Oct. 2021.

\bibitem{r5}
Z. Lyu, G. Zhu and J. Xu, ``Joint Maneuver and Beamforming Design for UAV-enabled Integrated Sensing and Communication,'' \emph{IEEE Trans. Wireless Commun.}, vol. 22, no. 4, pp. 2424-2440, Apr. 2023.

\bibitem{r6}
Y. K. Tun et al., ``Joint Beamforming and Trajectory Optimization for Multi-UAV-Assisted Integrated Sensing and Communication Systems,'' \emph{IEEE Trans. Veh. Technol.}

\bibitem{r10}
B. Zheng et al., ``Rotatable Antenna Enabled Wireless Communication and Sensing: Opportunities and Challenges," \emph{IEEE Wireless Commun.}, vol. 33, no. 3, pp. 134-141, Jun. 2026.

\bibitem{r7}
L. Zhu et al., ``Movable Antennas for Wireless Communication: Opportunities and Challenges," \emph{IEEE Commun. Mag.}, vol. 62, no. 6, pp. 114-120, June 2024.

\bibitem{r8}
B. Zheng et al., ``Rotatable Antenna-Enabled Wireless Communication: Modeling and Optimization," \emph{IEEE Trans. Wireless Commun.}, vol. 74, pp. 6825-6842, 2026.

\bibitem{6DMA}
X. Shao et al., ``6D Movable Antenna Based on User Distribution: Modeling and Optimization," \emph{IEEE Trans.
Wireless Commun.}, vol. 24, no. 1, pp. 355-370, Jan. 2025.

\bibitem{r11}
X. Zhang et al., ``Rotatable Antenna Array Enabled UAV mmWave Massive MIMO Communication," \emph{IEEE Trans.
Wireless Commun.}, vol. 74, pp. 1219-1236, 2026.

\bibitem{r14} 
L. Xiang et al., ``Joint Optimization of Beamforming
and 3D Array Steering for Multi-Antenna UAV Communications," \emph{
IEEE WCNC}, Dubai, 2024.

\bibitem{r12} L. Xiang et al., ``Energy-Efficient Dynamic Array-Steering
and Beamforming for UAV-Aided Communications”, \emph{IEEE GlobeCom},
Cape Town, 2024.

\bibitem{r13}
B. Yilmaz et al., ``Completion Time Minimization for UAV-aided Communications with Rotatable Dipole Array”, \emph{IEEE ICC},
Montreal, 2025.

\bibitem{r15} F. Pei et al., ``Joint Optimization of Beamforming
and 3D Array-Steering for UAV-Aided ISAC," \emph{IEEE ICC}, Denver,
CO, USA, 2024.

\bibitem{r16} F. Pei et al., ``Transmit Beamforming and Array Steering
Optimization for UAV-Aided Bistatic ISAC”, \emph{IEEE GlobeCom}, Cape
Town, 2024.

\bibitem{r25}
Q. Shi et al., ``Penalty Dual Decomposition Method for Nonsmooth Nonconvex Optimization—Part I: Algorithms and Convergence Analysis," \emph{IEEE Trans. Signal Process.}, vol. 68, pp. 4108-4122, 2020.

\bibitem{r26}
Q. Shi et al., ``Penalty Dual Decomposition Method for Nonsmooth Nonconvex Optimization—Part II: Applications," \emph{IEEE Trans. Signal Process.}, vol. 68, pp. 4242-4257, 2020.

\bibitem{r27}
Q. Shi et al., ``Spectral Efficiency Optimization For Millimeter Wave Multiuser MIMO Systems," \emph{IEEE J. Select. Topics Signal Process.}, vol. 12, no. 3, pp. 455-468, Jun. 2018.

\bibitem{r17}
Y. Shen et al., ``Graph Neural Networks for Scalable Radio Resource Management: Architecture Design and Theoretical Analysis," \emph{IEEE J. Select. Areas Commun.}, vol. 39, no. 1, pp. 101-115, Jan. 2021.

\bibitem{r18}
Y. Shen et al., ``Graph Neural Networks for Wireless Communications: From Theory to Practice," \emph{IEEE Trans.
Wireless Commun.}, vol. 22, no. 5, pp. 3554-3569, May 2023.

\bibitem{r19}
Y. Li et al., ``GNN-Based Beamforming for Sum-Rate Maximization in MU-MISO Networks," \emph{IEEE Trans.
Wireless Commun.}, vol. 23, no. 8, pp. 9251-9264, Aug. 2024.

\bibitem{r20}
N. Shlezinger et al., ``DeepSIC: Deep Soft Interference Cancellation for Multiuser MIMO Detection," \emph{IEEE Trans. Wireless Commun.}, vol. 20, no. 2, pp. 1349-1362, Feb. 2021.

\bibitem{r21}
A. Beyazıt et al., ``On Stream Selection for Interference Alignment in Heterogeneous Networks," \emph{J. Wireless Commun. Network}, 80 (2016).

\bibitem{r22} C. A. Balanis, \emph{Antenna Theory: Analysis and Design}. 4th ed, 2016.

\bibitem{r23} D. Xu et al., ``Multiuser MISO UAV Communications in Uncertain Environments With No-Fly Zones: Robust Trajectory and Resource Allocation Design," \emph{IEEE Trans. Commun.}, vol. 68, no. 5, pp. 3153-3172, May 2020.

\bibitem{r24}
X. Dai et al., ``Energy-Efficient UAV Communications in the Presence of Wind: 3D Modeling and Trajectory Design," \emph{IEEE Trans.
Wireless Commun.}, vol. 23, no. 3, pp. 1840-1854, March 2024.

\bibitem{r28}
M.R.Hestenes, ``Multiplier and Gradient Methods,” \emph{J. Optim. Theory Appl.}, vol. 4, no. 5, pp. 303–320, 1969.

\bibitem{r29}
S. Boyd et al., \emph{Convex Optimization}, Cambridge University Press, 2004.

\bibitem{r30}
X. Zhang et al., ``Massive MIMO Multicasting With Finite Blocklength," \emph{IEEE Trans.
Wireless Commun.}, vol. 23, no. 10, pp. 15018-15034, Oct. 2024.

\bibitem{r31}
Y. Sun et al., ``Majorization-Minimization Algorithms in Signal Processing, Communications, and Machine Learning," \emph{IEEE Trans. Signal Process.}, vol. 65, no. 3, pp. 794-816, Feb.1, 2017.

\bibitem{r32} P.-A. Absil et al., \emph{Optimization Algorithms on
Matrix Manifolds.} Princeton University Press, 2009.

\bibitem{r33}
J. Nocedal et al., \emph{Numerical Optimization}, 2nd Edition, Springer, 2006.

\bibitem{r34}
S. K. Joshi et al., ``Weighted Sum-Rate Maximization for MISO Downlink Cellular Networks via Branch and Bound," \emph{IEEE Trans. Signal Process.}, vol. 60, no. 4, pp. 2090-2095, April 2012.

\bibitem{GNN}
W. L. Hamilton, \emph{Graph Representation Learning}, Morgan \& Claypool, 2020.

\bibitem{r35}
R. Ismayilov et al., ``Adaptive Beam-Frequency Allocation Algorithm with Position Uncertainty for Millimeter-Wave MIMO Systems," \emph{IEEE VTC Spring}, Porto, Portugal, 2018.

\end{thebibliography}
\end{document}